\newtheorem{lemma}{Lemma}
\newtheorem{proposition}{Proposition}
\begin{document}

%*********************************************************************************************************************
%\title{Increasing excess entropy in the approach towards equilibrium in a reversible Ising dynamics model}
%\title{The approach towards equilibrium in a reversible Ising dynamics model -- an exact solution of the one-dimensional Q2R dynamics}
\title{The approach towards equilibrium in a reversible Ising dynamics model -- an information-theoretic analysis\\ based on an exact solution}
\author[1]{Kristian Lindgren\footnote{Corresponding author: kristian.lindgren@chalmers.se}}
\author[2]{Eckehard Olbrich}
%\address{Complex systems group, Department of Energy and Environment, \\ Chalmers University of Technology, SE-41296 G\"oteborg, Sweden\fnref{label3}}
\affil[1]{Complex systems group, Department of Energy and Environment, \authorcr \textit{Chalmers University of Technology, SE-41296 G\"oteborg, Sweden}}
\affil[2]{Max Planck Institute for Mathematics in the Sciences,  \authorcr \textit{Inselstra\ss e 22, 04103 Leipzig, Germany}}

%*********************************************************************************************************************

\date{\today}

\maketitle

\begin{abstract}
We study the approach towards equilibrium in a dynamic Ising model, the Q2R cellular automaton, with microscopic reversibility and conserved energy for an infinite one-dimensional system. Starting from a low-entropy state with positive magnetisation, we investigate how the system approaches equilibrium  characteristics given by statistical mechanics. We show that the magnetisation converges to zero exponentially. The reversibility of the dynamics implies that the entropy density of the microstates is conserved in the time evolution. Still, it appears as if equilibrium, with a higher entropy density is approached. In order to understand this process, we solve the dynamics by formally proving how the information-theoretic characteristics of the microstates develop over time. With this approach we can show that an estimate of the entropy density based on finite length statistics within microstates converges to the equilibrium entropy density. The process behind this apparent entropy increase is a dissipation of correlation information over increasing distances. It is shown that the average information-theoretic correlation length increases linearly in time, being equivalent to a corresponding increase in excess entropy.
\end{abstract}

\section{Introduction}

The apparent contradiction between the reversibility of the microscopic equations of motions and the irreversibility of macroscopic processes has been a problem since the development of statistical mechanics by Maxwell, Boltzmann and Gibbs, see, e.g., refs. \cite{Mackey1989,Lebowitz1999}. How can microscopic reversibility be consistent with macroscopically irreversible phenomena like the second law of thermodynamics?

In this paper we take a microscopic perspective on the development of statistical properties of a system that follows a time evolution that is microscopically reversible. In what way can one understand how such a system "approaches" equilibrium? What is the role of internal correlations of the microstate and how do these change in the time evolution?

As an illustrative model we have chosen the energy conserving Ising dynamics model Q2R in one dimension. We consider the system in the thermodynamic limit, i.e., an infinite sequence of spins, and it is assumed that the initial microstate is generated by a Bernoulli process with a dominating spin direction so that a magnetised and ordered (low entropy) configuration serves as the starting point for the dynamics.

The Q2R rule employs a parallel update according to a checkerboard pattern alternating between the white and black sites.
This leads to a dynamics over a sequence of microstates, with (in general) changing internal statistical properties. Formally, we study how the dynamics changes the stochastic process that characterises the ensemble of microstates at the given time. The initial microstate is spatially ergodic, since it is a Bernoulli process. The same holds for any finite time step, even though a cellular automaton rule in general changes the process so that it becomes a hidden Markov model already after the first iteration.

We characterise the internal disorder (entropy) of a microstate at time $t$ by the entropy density of the corresponding generating process. This entropy is also directly derived from the internal statistics of the microstate by taking into account all possible internal correlations. This can then be viewed as an internal measure of disorder of the microstate -- a microscopic entropy \cite{Lindgren1988}.

Since the dynamics is microscopically reversible, the entropy density is conserved even if the stochastic process that generates the microstates changes \cite{Lindgren1987}. The aim with this paper is to gain a full understanding on how this can be consistent with the apparent picture of a dynamics that brings the magnetised initial state of low entropy density into a state with zero magnetisation and a seemingly higher entropy density.

We solve exactly the dynamics of Q2R in one dimension, starting with a Bernoulli generated microstate, by deriving the statistical properties of the hidden Markov models that generate the microstates at any time $t$.

The picture that emerges is one where some correlations remain at short distance -- in fact, exactly those that make sure that the energy is conserved. It is useful to discuss this in terms of ordered information, or negentropy (as the difference between full disorder and actual entropy density). This ordered information contains information in all correlations in the system, as well as density information, i.e., spin frequencies deviating from $\{1/2,1/2\}$. Except for the nearest neighbour correlations, all other information is transferred to ever increasing distances. This leads to three observations: (i) the magnetisation quickly approaches zero, (ii) the local correlations approach those that characterise an equilibrium microstate at the given energy, (iii) the rest of the correlations (the negentropy) becomes more and more difficult to detect as they require larger and larger blocks of spins and their exact characteristics for their detection.

The focus of the present paper is to examine to what extent this process can be quantified, and whether we can make a more precise statement on how equilibrium is approached on the microscopic level.

%We do this for a microscopically reversible model of Ising dynamics, the Q2R model \cite{Vichniac1984}, that has been extensively used for calculations of equilibrium properties in the two- and three-dimensional Ising model \cite{Stauffer1997}\footnote{Reference has to be checked. Perhaps taking another one? Stauffer \cite{Stauffer2000} cites 8 papers}. In our investigation we instead apply this to the one-dimensional Ising model. Even though the equilibrium state is trivial, the microscopic dynamics that in some sense brings the systems towards equilibrium is less so.

In \cite{Stauffer2000} microscopic reversibility and macroscopic irreversibility for the Q2R automaton was discussed looking at how the period length growth with the system size and thus showing that the recurrence time goes to infinity in the thermodynamic limit. In the present study we want to understand the approach to equilibrium from an information-theoretic point of view.  The aim is to show and quantify how information in correlations are spread out over increasing distances so that, when observing configurations over shorter length scales, it appears as if the system is approaching equilibrium. 

In \cite{Kari2015} it is discussed in what way reversible and, more generally, surjective cellular automata exhibit mixing behaviour in the time evolution, i.e., whether there are cellular automata that in some way can be said to approach a random distribution (Bernoulli distribution with equal probabilities). The most well studied example is the XOR rule, see, e.g., \cite{Lind1984,Lindgren1987}, in which there is a randomization even though there are also recurrent, locally detectable, low entropy states, even for an infinite system. It is stated as an open question whether there are more physically relevant models that allow for a mathematical treatment of how such a mixing may occur, which then would imply a mixing modulo the energy constraint of the system, i.e., a maximization of the entropy density given the energy density \cite{Kari2015}. We contribute to that question by providing the exact solution of the one-dimensional Q2R model as an example of a physically relevant model showing relaxation towards equilibrium.

The plan of the paper is the following: In section \ref{sec:Q2R} we introduce the model system -- the Q2R cellular automaton -- and discuss some of its known properties. In section \ref{sec:time_evolution} we provide the analytical solution for the time evolution of a specific non-equilibrium probability distribution starting from independent spins in the one-dimensional Q2R cellular automaton. 
We use this solution to investigate the time evolution of the information-theoretic quantities and how they are consistent with the system achieving thermodynamic equilibrium. 
In particular we show that the correlation information can be divided into two different contributions -- one part that reflects the equilibrium properties of the system (within interaction distance), and one part with an average correlation length that increases linearly in time.
%In particular we show that the excess entropy can be divided into two terms, one related to the equilibrium properties of the system and a second part linearly increasing in time, which corresponds to the increasing amount if information ... \eoc{necessary for exactly what? Reversing the macro-dynamics?}
In section \ref{sec:Results}, we discuss how the information-theoretic analysis explains how an equilibrium distribution is approached, even though the micro dynamics is reversible. The paper is then concluded by a discussion in section \ref{sec:Discussion}.

\section{Q2R -- a microscopically reversible Ising dynamics}
\label{sec:Q2R}
We consider the Q2R model \cite{Vichniac1984} in one dimension and in the limit of an infinite system. This means that we describe the spatial state (infinite sequence of spins) at a certain time as the outcome of a stationary stochastic process. The system is described as an infinite sequence of states, spin up or spin down, $\uparrow$ and $\downarrow$, respectively. In addition to this a state also holds the information whether to be updated or not in the current time step. The updating structure is such that every second spin is updated at $t$, and then at the next time step the other half of the lattice is updated, and so forth. The updating rule flips a spin when the spin flip does not change the energy, and it changes the state from updating to quiescent and vice versa. Normal nearest neighbour Ising interaction is assumed with an energy $-1$ for parallel spins ($\uparrow\uparrow$ or $\downarrow\downarrow$) and $+1$ for anti-parallel spins ($\uparrow\downarrow$ or $\downarrow\uparrow$). This means that the Q2R model is a micro canonical simulation of the Ising model, with conserved energy. It is also clear that the rule is reversible.

We assume that the initial state is generated by a Bernoulli process, and the aim is to give a statistical analysis of how 
%the spatial configuration changes
spatial configurations change
over time. Each time step is thus characterised by a certain stochastic process, and the Q2R rule transforms this process from one time step to the next.

%\eoc{Should we emphasize that we evolve an ensemble?}

Since the Q2R rule is reversible this implies that the entropy density $h(t)$ of the ensemble at a given time step $t$, or, equivalently, the entropy rate of the stochastic process that generates the ensemble at time step $t$, is a conserved quantity under the Q2R dynamics. This follows, for example, from the observation that there is a local rule (also Q2R, but with a state shift) that runs backwards in time. Since both of these rules are deterministic and thus imply a non-increasing entropy density, the entropy density for an infinite system is conserved under Q2R.

Furthermore, we assume that the stochastic process is ergodic. Note that this is a spatial ergodicity, which implies that for almost all microstates in the ensemble (at any point in time), we have the sufficient statistics to calculate any information-theoretic properties depending on finite length subsystems of the microstate. This means that we can characterise a single microstate, at any time $t$, and identify its internal entropy density and correlation characteristics, which is identical to the same characteristics for the whole ensemble. This is conceptually appealing, since we can then identify an entropy density quantity as a property of a single microstate.

\section{Analysis of the time evolution starting from a Bernoulli distribution}
\label{sec:time_evolution}
We assume that the initial spatial state is described by, or generated by, a Bernoulli process with probability $0<p<1/2$ for spin up. In addition to this we augment our state variable with a second binary variable which marks every second lattice site being in an updating state ($\underline 0$ or $\underline 1$), and the others in quiescent states (0 or 1), where the spin direction is denoted by 0 and 1 (with or without the underline mark) for spin up and down, respectively. Thus the spin up probability is $p(\uparrow)=p(1)+p(\underline 1)$, and similarly for spin down.

The entropy density, i.e., the entropy per site, of such an initial state is simply the entropy of the Bernoulli process,  
\begin{align}
h=S[\{p,1-p\}]:=p \log \frac{1}{p} + (1-p) \log \frac{1}{1-p} \label{h-bernoulli}
\end{align}
since the updating state structure of quiescent and updating states is completely ordered and does not contribute to the entropy density. (The function $S$ is the entropy of the probability distribution, as indicated by the equation.)

With an energy contribution from parallel and anti-parallel spins of $-1$ and $1$, respectively, we get the energy density $u = -(1-2p)^2$ of the initial state. The system is not in equilibrium since the entropy density $h$ is not in a maximum given the energy density $u$. This is obvious already from the fact that the initial magnetisation is positive.

Does the time evolution bring the system closer to the maximum entropy description in some sense, and how? The answer to these questions is the focus of the presented analysis and discussion.

\subsection{Time evolution of the magnetisation}
The Q2R rule in one dimension can be expressed as a simple addition modulo 2 rule for the updating states, $\underline s_{i,t}$, at position $i \in  \mathbb{Z}$ and time $t$, and just a copying of the spin state for the quiescent states, $s_{j,t}$, at $j \in  \mathbb{Z}$, so that at time $t+1$ we get
\begin{align}
{\underline s}_{i,t+1} &= s_{i,t}  \label{q-state-1step}  \\
s_{j,t+1} &= s_{j-1,t} + {\underline s}_{j,t} + s_{j+1,t} \text{     (mod 2)} \;. \label{u-state-1step}
\end{align}
The addition modulo $2$ for the updating states is the operation that flips the spin (0 or 1) whenever that does not change the local energy. This allows us to express the local states at any time, as a sum over initial spin states, using the following Proposition.

\begin{proposition} \label{additive dynamics}
Updating and quiescent states, $\underline s_{i,t}$ and $s_{j,t}$, respectively, at time $t>0$ can be expressed as sums modulo 2 of initial spin states, $\xi_{i'}$ (with $i'\in  \mathbb{Z}$), over certain intervals, 
%of length $2t-1$ and $2t+1$, respectively,
%
\begin{align}
{\underline s}_{i,t} &= \sum_{i'\in I_{i,t-1}} \xi_{i'} \text{     (mod 2)} \;, \label{u-state} \\
s_{j,t} &= \sum_{j'\in I_{j,t}} \xi_{j'} \text{     (mod 2)} \;, \label{q-state}
\end{align}
where $I_{i,n}=\{i-n,...,i,...i+n\}$ denotes the $(2n+1)$-length interval of positions centred around $i$. A local updating state ${\underline s}_{i,t}$ at position $i$ and time $t$ thus depends on $2t-1$ initial stochastic variables, while a quiescent state $s_{j,t}$ depends on $2t+1$ initial stochastic variables.
\end{proposition}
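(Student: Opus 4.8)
The plan is to prove both closed forms simultaneously by induction on $t$, feeding the local update rules \eqref{q-state-1step} and \eqref{u-state-1step} into the inductive step. First I would fix the base case $t=1$. Here the augmented initial state records the Bernoulli spin $\xi_i$ at every site, so that $s_{i,0}=\underline s_{i,0}=\xi_i$; the copy rule \eqref{q-state-1step} then gives $\underline s_{i,1}=s_{i,0}=\xi_i$, matching \eqref{u-state} since $I_{i,0}=\{i\}$, and the XOR rule \eqref{u-state-1step} gives $s_{j,1}=\xi_{j-1}+\xi_j+\xi_{j+1}$ (mod $2$), matching \eqref{q-state} since $I_{j,1}=\{j-1,j,j+1\}$.

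For the inductive step I would assume both \eqref{u-state} and \eqref{q-state} at some time $t\ge 1$ and establish them at $t+1$. The updating half is immediate: by \eqref{q-state-1step} and the hypothesis \eqref{q-state}, $\underline s_{i,t+1}=s_{i,t}=\sum_{i'\in I_{i,t}}\xi_{i'}$ (mod $2$), which is \eqref{u-state} at time $t+1$ because $I_{i,(t+1)-1}=I_{i,t}$. The quiescent half carries the real content. Substituting the two hypotheses into \eqref{u-state-1step} gives
\begin{align}
s_{j,t+1} \equiv \sum_{j'\in I_{j-1,t}}\xi_{j'} + \sum_{j'\in I_{j,t-1}}\xi_{j'} + \sum_{j'\in I_{j+1,t}}\xi_{j'} \pmod{2},
\end{align}
so the task reduces to showing that, counted with multiplicity modulo $2$, these three index sets combine into exactly $I_{j,t+1}$.

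The key step, which I expect to be the main obstacle, is this mod-$2$ bookkeeping of overlapping intervals; rather than tally multiplicities site by site I would use two exact set identities. For $t\ge 1$ the two outer intervals overlap, and one checks directly that their union is the target interval, $I_{j-1,t}\cup I_{j+1,t}=I_{j,t+1}$ (each endpoint is pushed out by one), while their intersection is exactly the support of the central updating term, $I_{j-1,t}\cap I_{j+1,t}=I_{j,t-1}$. Hence every site of $I_{j,t-1}$ is counted twice by the two outer sums and cancels modulo $2$, but the central sum $\sum_{j'\in I_{j,t-1}}\xi_{j'}$ restores it exactly once, while each site of $I_{j,t+1}\setminus I_{j,t-1}$ is counted once by a single outer sum. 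The expression therefore collapses to $\sum_{j'\in I_{j,t+1}}\xi_{j'}$ (mod $2$), which is \eqref{q-state} at time $t+1$ and closes the induction. The final count of initial variables is then read off from the interval lengths, $|I_{i,t-1}|=2t-1$ for an updating state and $|I_{j,t}|=2t+1$ for a quiescent state.
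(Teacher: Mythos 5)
Your proposal is correct and follows essentially the same route as the paper: induction on $t$ with the identical base case, the copy rule handling the updating states, and the same mod-$2$ bookkeeping of the three overlapping intervals in the quiescent step (the paper writes the triple overlap explicitly as a coefficient $3$ on $\sum_{j'\in I_{j,t-1}}\xi_{j'}$, which is just your union/intersection identities $I_{j-1,t}\cup I_{j+1,t}=I_{j,t+1}$ and $I_{j-1,t}\cap I_{j+1,t}=I_{j,t-1}$ stated in a different form). No gaps.
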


\begin{proof}
We prove this by induction. At time $t=1$ the Q2R rule, Eqs.~(\ref{q-state-1step},~\ref{u-state-1step}), results in ${\underline s}_{i,1} = \xi_i$, and $s_{j,1} = \xi_{j-1}+\xi_{j}+\xi_{j+1}$ (modulo 2). (The addition should here be understood as operating on the spin states, 0 and 1.) Thus Eqs.~(\ref{u-state},~\ref{q-state}) hold for $t=1$. 

If we assume that the Proposition holds for time $t$, then we can use the Q2R rule, Eqs.~(\ref{q-state-1step},~\ref{u-state-1step}), to find the expression for the states at time $t+1$ (where all summations are assumed to be modulo 2),
\begin{align}
{\underline s}_{i,t+1} &= s_{i,t} = \sum_{i'\in I_{i,t}} \xi_{i'} \;,  \\
s_{j,t+1} &=s_{j-1,t}+{\underline s}_{j,t}+s_{j+1,t} = \nonumber \\
&= \sum_{j'\in I_{j-1,t}} \xi_{j'} +  \sum_{j'\in I_{j,t-1}} \xi_{j'} +  \sum_{j'\in I_{j+1,t}} \xi_{j'} = \nonumber \\
&= \xi_{j-t-1}+ \xi_{j-t} + 3 \Big( \sum_{j'\in I_{j,t-1}} \xi_{j'} \Big) + \xi_{j+t}+ \xi_{j+t+1} = \nonumber \\
&=  \sum_{j'\in I_{j,t+1}} \xi_{j'} \;, 
\end{align}
which are Eqs.~(\ref{u-state},~\ref{q-state}) of the Proposition for $t+1$.  \\
\end{proof}

\noindent
This means that we get the distribution for local states, $P_1(t)$, i.e., single site distribution, for $t>0$ determined by 
\begin{align}
p(1,t) = \frac{1}{2} f_\text{odd}^{(2t+1)}(p) \;,\;\;\;\;\;\;p(0,t)=\frac{1}{2}-p(1,t) \;, \label{px} \\ 
p(\underline1,t) = \frac{1}{2} f_\text{odd}^{(2t-1)}(p) \;,\;\;\;\;\;\;p(\underline0,t)=\frac{1}{2}-p(\underline1,t) \;. \label{p_}
\end{align}
Here $f_\text{odd}^{(n)}(p)$ is the probability for getting an odd number of 1's from a process generating $n$ independent symbols (0 or 1) with probability $p$ for each 1, i.e., 
\begin{align}
f_\text{odd}^{(n)}(p) = \sum_{\text{odd }k \in \{0,...,n\}} \binom{n}{k} p^k(1-p)^{n-k} = \frac{1}{2} \big( 1- (1-2p)^n \big) \;. \label{f-odd}
\end{align}
The derivation makes use of $((1-p) - p)^n = \sum_k \binom{n}{k} (-p)^k(1-p)^{n-k} = - f_\text{odd}^{(n)}(p) + (1-f_\text{odd}^{(n)}(p))$. The length of the sequence that affects a spin state at $t$ is $\sim 2t$. An odd number of 1's in the sequence at $t=0$ results in a 1 at the position at time $t$. As $t$ increases, the probability for spin up, $p(\uparrow,t) = p(\underline1,t)+p(1,t) \rightarrow 1/2$. Since the rule transforms an ergodic stochastic process description of configurations at time $t$ to a unique new such process at time $t+1$, this implies that the magnetisation approaches 0. We define the magnetisation by the difference in spin probabilities, i.e., the average of upward spins (with direction $+1$) and downward spins (with direction $-1$),
\begin{equation}
m(t) = p(\uparrow,t) - p(\downarrow,t) = 2p(\uparrow,t)-1 \;. 
\end{equation} 
The approach to zero magnetisation is then given by the following proposition.

\begin{proposition} \label{magnetisation}
The magnetisation, $m(t)$, is given by
\begin{equation}
m(t) = - \frac{1}{2} \big( (1-2p)^{2t-1} + (1-2p)^{2t+1} \big) \;. 
\end{equation} 
\end{proposition}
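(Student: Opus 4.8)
The plan is to combine the single-site probabilities already established in Eqs.~(\ref{px}--\ref{f-odd}) with the definition of the magnetisation. The magnetisation is $m(t) = 2p(\uparrow,t) - 1$, where $p(\uparrow,t) = p(1,t) + p(\underline 1,t)$ is the total probability of spin up, summing the contributions from quiescent and updating states. So the first step is simply to add the two expressions from Eqs.~(\ref{px}) and~(\ref{p_}):
\begin{equation}
p(\uparrow,t) = \tfrac{1}{2} f_\text{odd}^{(2t+1)}(p) + \tfrac{1}{2} f_\text{odd}^{(2t-1)}(p) \;. \nonumber
\end{equation}

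Next I would substitute the closed form $f_\text{odd}^{(n)}(p) = \tfrac{1}{2}\big(1 - (1-2p)^n\big)$ from Eq.~(\ref{f-odd}) for both $n = 2t+1$ and $n = 2t-1$. This gives
\begin{equation}
p(\uparrow,t) = \tfrac{1}{4}\big(1 - (1-2p)^{2t+1}\big) + \tfrac{1}{4}\big(1 - (1-2p)^{2t-1}\big) = \tfrac{1}{2} - \tfrac{1}{4}\big((1-2p)^{2t+1} + (1-2p)^{2t-1}\big) \;, \nonumber
\end{equation}
where the two constant $\tfrac{1}{4}$ terms combine to $\tfrac{1}{2}$.

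The final step is to form $m(t) = 2p(\uparrow,t) - 1$. Multiplying the expression above by $2$ and subtracting $1$ cancels the constant term, leaving
\begin{equation}
m(t) = -\tfrac{1}{2}\big((1-2p)^{2t-1} + (1-2p)^{2t+1}\big) \;, \nonumber
\end{equation}
which is exactly the claimed formula. This is essentially a direct substitution, so I do not anticipate a genuine obstacle; the only point requiring a little care is bookkeeping the factors of $\tfrac{1}{2}$ correctly through the single-site probabilities and the definition of $m(t)$, and making sure the two exponents $2t-1$ and $2t+1$ are attached to the right terms. Since $0 < p < 1/2$ implies $0 < 1-2p < 1$, both powers decay geometrically, confirming the stated exponential approach of the magnetisation to zero.
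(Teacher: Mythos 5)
Your proof is correct and follows essentially the same route as the paper: sum the single-site spin-up probabilities from Eqs.~(\ref{px}) and~(\ref{p_}), substitute the closed form of $f_\text{odd}^{(n)}$ from Eq.~(\ref{f-odd}), and apply $m(t)=2p(\uparrow,t)-1$. If anything, you track the factors of $\tfrac{1}{2}$ in $p(1,t)=\tfrac{1}{2}f_\text{odd}^{(2t+1)}(p)$ and $p(\underline 1,t)=\tfrac{1}{2}f_\text{odd}^{(2t-1)}(p)$ more explicitly than the paper's own intermediate display, which omits them before arriving at the same final expression.
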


\begin{proof}
The frequency of state $\uparrow$ at time $t$ is given by two binomial distributions (for updating and quiescent states, respectively) and their corresponding probabilities for having an odd number of 1's,
\begin{equation}
p(\uparrow,t) = f_\text{odd}^{(2t-1)}(p) + f_\text{odd}^{(2t+1)}(p)  = \frac{1}{2} - \frac{1}{4} (1-2p)^{2t-1} - \frac{1}{4} (1-2p)^{2t+1} \;,
\end{equation} 
where we have used Eq.~(\ref{f-odd}). This proves the Proposition.  
\end{proof}

\noindent
Thus we have an exponential convergence towards zero magnetisation. The frequency of spin up (and down) quickly approaches 1/2. For example, with an initial frequency of $p=0.2$, we have after $t=10$ time steps, $p(\uparrow,t)=0.499979...$ .

\subsection{Time evolution of information-theoretic characteristics}
In order to analyse how the Q2R dynamics transform the initial state (distribution of states) to states that in some way resembles equilibrium states, we make an information-theoretic analysis of the spatial configurations at the different time steps $t$, i.e., the stochastic processes that characterise those configurations.

\subsubsection{Information theory for symbol sequences}
The basis for the information-theoretic formalism is the set of probability distributions for $m$-length sequences at time $t$, determined by the corresponding stochastic process characterising the spatial configuration, $P_m(t) = \{p(x_1,...,x_m)\}_{x_i \in \{0,1,\underline0,\underline1 \}}$. 
All the key quantities for characterising order and disorder can be expressed in terms of block entropies, $H_m(t)$,
\begin{align}
H_m(t) = S[P_m(t)]=-\sum_{x_1,...,x_m} p(x_1,...,x_m) \log p(x_1,...,x_m)  \label{H-block}\;.
\end{align}
The entropy density 
\begin{align}
 h(t)=h=\lim_{m \to \infty} \frac{H_m(t)}{m} \;, \label{entropy-density}
\end{align}
expressing the randomness of the stochastic process generating microstates at time $t$, is conserved since the dynamics is reversible. The conditional entropy 
\begin{align}
 h_m(t) =-\sum_{x_1,...,x_m} p(x_1,...,x_m) \log p(x_m|x_1,...,x_{m-1})  \label{cond-h}\;.
\end{align}
can be expressed as the difference of two block entropies 
\begin{align}
h_m(t) = H_m(t) - H_{m-1}(t) \;,
\end{align}
and can be interpreted as a finite length estimate of the entropy rate. The full entropy density is recovered in the infinite length limit, because of the spatial stationarity. The entropy density $h$ can thus be expressed as
\begin{align}
h = \lim_{m \rightarrow \infty} h_m(t)  \;.
\end{align}
The decrease in the estimate of the entropy density $h_m(t)$ as $m$ increases quantifies correlation information $k_m(t)$ in blocks over length $m$, 
\begin{align}
k_m(t) &= h_{m-1}(t)-h_m(t) = \nonumber \\[4pt]
&= -H_{m} + 2H_{m-1} - H_{m-2} \ge 0  \;. \label{k_m}
\end{align}
Here we define, $k_2(t)=-H_2(t)+2H_1(t)$ representing neighbouring pair correlation information (which is equal to the mutual information \cite{Cover1991}), and $k_1(t)=\log 4-H_1(t)$ representing density information. It is sometimes useful to work with the Kullback-Leibler form of the correlation information, based on conditional probabilities \cite{Eriksson1987,Lindgren1987},
\begin{align}
k_m(t) = \sum_{x_1,...,x_{m-1}} p(x_1,...,x_{m-1}) \sum_{x_{m}} p(x_m | x_1,...,x_{m-1}) \log \frac {p(x_m | x_1,...,x_{m-1})}{p(x_m | x_2,...,x_{m-1})} \ge 0  \;.\label{k_m-2}
\end{align}
Note that $k_m(t)$ is also the conditional mutual information between $x_1$ and $x_m$ given $x_2,\ldots,x_{m-1}$ which quantifies the additional amount of information that $x_1$ provides about the $m$th symbol, on average, given that one knows already $x_2,\ldots,x_{m-1}$.  
With $n=4$ possible states per lattice site, the total information of $\log n=\log 4$ can be fully decomposed into the introduced quantities,
\begin{align}
\log 4 = h + k_\text{corr} = h + \sum_{m=1}^\infty k_m(t)  \;, \label{info-decmoposition}
\end{align}
where we have introduced the total correlation information, $k_\text{corr}$, as the sum over all contributions $k_m(t)$, including the density information $k_1(t)$. 
This means that the estimate of the entropy density $h_m(t)$, based on blocks of length up to $m$, can be written
\begin{align}
h_m(t) = \log 4 - \sum_{j=1}^m k_j(t)  \;. \label{h_m}
\end{align}
The most common information-theoretic quantity for characterising complexity is the Excess entropy \cite{Shaw1984,Crutchfield2003} or Effective measure complexity \cite{Grassberger1986}, $\eta(t)$,
\begin{align}
\eta(t) =  \lim_{m \rightarrow \infty} H_m(t) - m \, h  \;. \label{eta-def}
\end{align}
The excess entropy can be expressed as a weighted sum over correlation information terms,
\begin{align}
\eta(t) =  \sum_{m=2}^\infty (m-1) k_m(t)  \;, \label{eta}
\end{align}
and thus it reflects an average information-theoretic correlation length \cite{Lindgren1987}. Since the entropy density $h$ is conserved in the time evolution, so is also the correlation information $k_\text{corr}$. But the lengths at which correlation information is located may change over time, and thus we would in general expect the excess entropy, or the average correlation length, to change over time.

\subsubsection{Some special properties for the Q2R model in one dimension}

Let $P^{(n)}$ be the distribution over an odd or an even number of $1$'s in a sequence of $n$ independently generated $0$'s or $1$'s (with probability $p$ for $1$), i.e.,
\begin{align}
P^{(n)} = \Big\{ f_\text{odd}^{(n)}(p), 1-f_\text{odd}^{(n)}(p) \Big\}  \;,\label{P-odd}
\end{align}
with $f_\text{odd}^{(n)}$ defined by Eq.~(\ref{f-odd}).\\ 

\noindent
Because we have a system with a strong periodic order of updating and quiescent states, it is convenient to split the correlation information quantities into two different types, $k_{m}^{(-)}$ and $k_{m}^{(\times)}$, respectively, depending on whether the last state in the block over which correlations are considered is updating or quiescent ($\underline s_m$ or $s_m$), 
\begin{align}
%k_{m} = k_{m-} + k_{m\times} \;. \\
k_{m} = k_{m}^{(-)} + k_{m}^{(\times)} \;. \label{k_m-decompose}
\end{align}
We introduce a corresponding notation for the block entropies, $H_{m}^{(-)}$ and $H_{m}^{(\times)}$, referring to blocks ending with an updating and a quiescent state, respectively. This means that $H_{m}^{(-)}$ is defined by  
\begin{align}
H_{m}^{(-)} = -\sum_{x_1,...,x_{m-1}} \sum_{x_m \in \{\underline0, \underline1\} } p(x_1,...,x_m) \log p(x_1,...,x_m)  \label{Hq-block}\;,
\end{align}
which means that every second $x_i$ to the left of $x_m$ also needs to be an updating state for the corresponding term to contribute. The block entropy $H_{m}^{(\times)}$ is defined similarly. Then $k_{m}^{(-)}$ and $k_{m}^{(\times)}$ can be written
\begin{align}
k_{m}^{(-)} &= -H_{m}^{(-)} + H_{m-1}^{(-)}+H_{m-1}^{(\times)} - H_{m-2}^{(\times)} \;, \label{km_} \\
k_{m}^{(\times)} &= -H_{m}^{(\times)} + H_{m-1}^{(\times)}+H_{m-1}^{(-)} - H_{m-2}^{(-)} \;, \label{km+}
\end{align}
where we have used Eqs.~(\ref{H-block},~\ref{k_m},~\ref{k_m-2},~\ref{Hq-block}). We observe that, for a spatially symmetric system,
\begin{align}
H_{2m}^{(-)} = H_{2m}^{(\times)}  \;, \label{H2m}
\end{align}
since an even length sequence that ends with an updating state needs to start with a quiescent state. 

\subsubsection{Correlation characteristics of the Q2R model in one dimension}

The following propositions assume that the initial state at time $t=0$ is generated as described above: The spin states are generated by a Bernoulli process with probability $p$ for spin up. Then the alternating order of updating and quiescent states are added on top of this. The entropy density at any $t$ is then determined by the initial entropy density, i.e., $h=-p \log p - (1-p) \log(1-p)$. The goal is to derive expressions that describe how the different contributions to the correlation information may change over time, and how that affects the estimates of the entropy density $h_m(t)$. \\
%a closed expression for the excess entropy as a function of time as a characterisation of the information dynamics in correlations.\\

\begin{proposition} \label{Prop:k_2m}
\noindent
%{\bfseries Proposition 1:} \\
There is no correlation information over even length blocks larger than 2. For $m \ge 2$,
\begin{align}
k_{2m}(t)=0 \;. 
\end{align}
\end{proposition}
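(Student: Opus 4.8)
The plan is to use the conditional mutual information reading of the correlation terms recorded just above the statement, namely $k_{2m}(t) = I(x_1; x_{2m}\mid x_2,\dots,x_{2m-1})$, so that proving $k_{2m}(t)=0$ reduces to establishing that the two endpoints of a length-$2m$ block are conditionally independent given the interior symbols.

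First I would invoke Proposition~\ref{additive dynamics} to replace every symbol in the block by a sum modulo $2$ of the initial independent Bernoulli variables $\xi_{i'}$ over a centred interval, where a quiescent symbol reaches out to $\pm t$ while an updating symbol reaches only to $\pm(t-1)$. I fix one phase of the block; by the reflection symmetry underlying Eq.~(\ref{H2m}) the two phases yield identical quantities, so it suffices to treat one and note the other is its mirror image. Take the phase in which the left endpoint $x_1$ is \emph{quiescent}, with the wide support, while its block-neighbour $x_2$ is updating with the narrower support.

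The key step is the observation that, precisely because a quiescent endpoint's interval is two positions wider than that of every other symbol in an even block, its two outermost initial variables — whose modulo-$2$ sum I call $B$ — appear in $x_1$ and in no other block symbol. Concretely $x_1 = x_2 \oplus B$, and since $x_2,\dots,x_{2m}$ are all functions of initial variables lying strictly inside the remaining support, $B$ is independent of the whole tuple $(x_2,\dots,x_{2m})$. I would then conclude that, after conditioning on $x_2,\dots,x_{2m-1}$ (which contains $x_2$ whenever $m\ge 2$), the residual randomness in $x_1$ is exactly this fresh, independent bit $B$, which carries no information about $x_{2m}$; hence $I(x_1;x_{2m}\mid x_2,\dots,x_{2m-1})=0$ and $k_{2m}(t)=0$. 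The opposite phase is handled by the identical argument applied at the quiescent endpoint, which there is $x_{2m}$.

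The main obstacle is the interval bookkeeping that underlies the "fresh pair" claim: one must verify carefully that the quiescent endpoint genuinely protrudes by two positions past all other block symbols — i.e. that its two extremal initial variables occur in that endpoint alone — and that this remains true down to the smallest case $m=2$ and for every $t\ge 1$, where the narrow updating intervals can degenerate. Once the endpoint-exclusivity and independence of $B$ are nailed down, the conclusion is immediate. It is worth remarking why the same reasoning does \emph{not} force the odd-length terms to vanish: for an odd block both endpoints share the same parity, so the clean structure of one protruding quiescent endpoint supplying an independent fresh pair is absent, which is exactly why the odd-length $k_m(t)$ can remain positive.
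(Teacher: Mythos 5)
Your proof is correct, and its engine is the same as the paper's: the identity $s_{m,t}=\underline s_{m-1,t}+\xi_{m+t-1}+\xi_{m+t}$ (mod 2) of Eq.~(\ref{state_x}), i.e.\ the fact that a quiescent state differs from its updating neighbour by a ``fresh'' pair of initial variables occurring in no other symbol of the block. Where you genuinely diverge is in how the two phases of an even block are dispatched. The paper applies this observation only at the \emph{right} end of a block, proving the stronger intermediate statement $k_m^{(\times)}(t)=0$ for all $m>2$, and then handles the phase ending in an updating state purely algebraically, via the even-length symmetry $H_{2m}^{(-)}=H_{2m}^{(\times)}$ of Eq.~(\ref{H2m}) inserted into the expansions (\ref{km_})--(\ref{km+}). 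You instead note that an even block has exactly one quiescent endpoint and apply the fresh-pair argument at whichever end that is, reading $k_{2m}$ as the conditional mutual information $I(x_1;x_{2m}\mid x_2,\dots,x_{2m-1})$; your left-endpoint identity $x_1=x_2\oplus B$ with $B$ independent of $(x_2,\dots,x_{2m})$ is the mirror image of Eq.~(\ref{state_x}) and does the work that the paper's block-entropy symmetry does. Your route is more directly probabilistic and avoids the $H^{(-)}/H^{(\times)}$ bookkeeping; the paper's route has the advantage that the reusable statement $k_m^{(\times)}(t)=0$ is invoked again verbatim in the proofs of Proposition~\ref{Prop:k_2m+1} and Lemma~\ref{Lemma:H_2}. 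Your interval bookkeeping is sound, including the degenerate case $t=1$ where an updating state reduces to a single initial variable. One small caveat on your closing aside: the odd-length terms do not survive merely because ``both endpoints share the same parity'' --- in the odd phase where both endpoints are quiescent your fresh-pair argument still applies and that phase contributes nothing; the nonzero odd-length correlation lives entirely in the phase where both endpoints are updating and neither protrudes.
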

\begin{proof}
We start with the following observation.\\
{\em Observation:}
\begin{align}
k_{m}^{(\times)}(t) = 0 \text{   for $m>2$} \;. \label{q_no-corr} 
\end{align}
This follows from the fact that, at any $t$, the conditional probability of a quiescent state $s_{m,t}$ does not change when adding new information in states beyond (to the left of) $\underline s_{m-1,t}$. From Eqs.~(\ref{u-state},~\ref{q-state}) we see that $s_{m,t}=\xi_{m-t}+...+\xi_{m+t}$ (mod 2), while $\underline s_{m-1,t}=\xi_{m-t}+...+\xi_{m+t-2}$ (mod 2), i.e., 
\begin{align}
s_{m,t}=\underline s_{m-1,t}+\xi_{m+t-1}+\xi_{m+t} \; \text{  (mod 2).}  \label{state_x}
\end{align}
But the two stochastic variables $\xi_{m+t-1}$ and $\xi_{m+t}$ are not part of any of the states $s_{m',t}$ further left (with position $m'<m-1$). This means that $p(s_{m,t} | \,...\, s_{m-2,t}, \underline s_{m-1,t})=p(s_{m,t} | \underline s_{m-1,t})$, which with Eq.~(\ref{k_m-2}) proves the observation.

Thus we know that, for $m\ge2$, $k_{2m}^{(\times)}(t)=0$. From Eqs.~(\ref{km_},~\ref{km+},~\ref{H2m}) we see that $k_{2m}^{(-)}(t)=k_{2m}^{(\times)}(t)=0$, and the Proposition then follows from Eq.~(\ref{k_m-decompose}).   \\
\end{proof}

\begin{proposition} \label{Prop:k_2m+1}
A correlation information quantity at distance $2m-1$, for $m>1$, is transferred to a correlation information quantity at distance $2m+1$ in the next time step. Thus, for $t>0$ and $ m > 1$,
\begin{align}
k_{2m+1}(t)=k_{2m-1}(t-1) \;. 
\end{align}
\end{proposition}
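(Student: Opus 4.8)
The plan is to reduce both sides to a conditional mutual information between the two end states of a block, and then to show that this quantity depends on $t$ and $m$ only through the combination $t-m$. By the Observation in the proof of Proposition~\ref{Prop:k_2m}, $k_j^{(\times)}(t)=0$ for every $j>2$, so for $m>1$ both $k_{2m+1}(t)=k_{2m+1}^{(-)}(t)$ and $k_{2m-1}(t-1)=k_{2m-1}^{(-)}(t-1)$ receive contributions only from blocks ending in an \emph{updating} state. Reading the Kullback--Leibler form (\ref{k_m-2}) as a conditional mutual information, it therefore suffices to analyse
\[
k_{2m+1}(t)=I\big(x_1;x_{2m+1}\mid x_2,\dots,x_{2m}\big),
\]
where $(x_1,\dots,x_{2m+1})$ is a block at time $t$ whose last, and hence every second, state is updating, so that $x_1,x_3,\dots,x_{2m+1}$ are updating and $x_2,\dots,x_{2m}$ are quiescent.

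Next I would use Proposition~\ref{additive dynamics} to write each $x_i$ as a sum modulo $2$ of initial bits $\xi$ and read off the supports from (\ref{u-state},~\ref{q-state}). For $t\ge m+1$ the block then splits into $2m$ independent \emph{boundary pairs} $b_k$, each the sum of two neighbouring $\xi$'s at the two ends of the window, together with a single \emph{core} bit $C$, the sum of the $2t-2m-1$ innermost $\xi$'s shared by all of $x_1,\dots,x_{2m+1}$. Concretely one obtains the staircase
\[
x_{2j+1}=(b_{j+1}+\dots+b_m)+C+(b_{m+1}+\dots+b_{m+j}),\qquad j=0,\dots,m,
\]
with an analogous expression for the quiescent states, exactly in the spirit of the relation (\ref{state_x}) used for Proposition~\ref{Prop:k_2m}. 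These $2m+1$ forms are an invertible function of the independent bits $b_1,\dots,b_{2m},C$, whose one-bit laws are $P^{(2)}$ for each pair and $P^{(2t-2m-1)}$ for the core, in the notation of (\ref{P-odd}).

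The decisive step is that conditioning on the middle peels off the outer pairs. Each consecutive difference $x_{2j}+x_{2j+1}$ or $x_{2j+1}+x_{2j+2}$ equals a single boundary pair, so conditioning on $x_2,\dots,x_{2m}$ fixes every boundary pair except the innermost two, $b_m$ and $b_{m+1}$, and in addition fixes the single combination $b_m+C+b_{m+1}$. Hence, given the middle, $x_1\leftrightarrow b_m+C$ and $x_{2m+1}\leftrightarrow C+b_{m+1}$ up to known offsets, and the conditional law of the endpoints is precisely that of the three-bit system $(b_m,C,b_{m+1})$ constrained by the value of $b_m+C+b_{m+1}$. Carrying out this now $m$-independent computation gives
\[
k_{2m+1}(t)=2\,S\big[P^{(2t-2m+1)}\big]-S\big[P^{(2t-2m-1)}\big]-S\big[P^{(2t-2m+3)}\big],
\]
which depends on $t$ and $m$ only through $t-m$. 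In the complementary regime $t\le m$ the two end forms $x_1$ and $x_{2m+1}$ couple only to mutually independent boundary bits, so the same conditioning yields $I(x_1;x_{2m+1}\mid x_2,\dots,x_{2m})=0$; likewise $k_{2m-1}(t-1)=0$ since $t\le m$ is equivalent to $(t-1)\le(m-1)$. In both regimes the value is invariant under $(t,m)\mapsto(t-1,m-1)$, which is exactly $k_{2m+1}(t)=k_{2m-1}(t-1)$.

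The main obstacle lies in the middle two steps: verifying for general $m$ that the block really decomposes into the staircase of disjoint boundary pairs plus one shared core, and that conditioning on the $2m-1$ middle states leaves precisely the innermost-pair-plus-core residual. Because $p\neq1/2$ the boundary bits are biased and conditioning reweights them, so one must track the actual conditional distributions rather than mere GF$(2)$ ranks; the payoff is that all dependence on the \emph{outer} pairs cancels, leaving a quantity governed solely by the core length $2t-2m-1$. A more mechanical alternative would expand $k_{2m+1}^{(-)}(t)$ through (\ref{km_}),~(\ref{H2m}) into the block entropies $H^{(-)}$ and $H^{(\times)}$ and evaluate each from the correlators $\langle(-1)^{\sum\xi}\rangle=(1-2p)^{\#}$, but this reproduces the same cancellation at the cost of more algebra.
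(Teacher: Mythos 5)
Your main computation, in the regime $t\ge m+1$, is essentially sound and takes a genuinely different route from the paper (which never touches the $\xi$'s again after Proposition~\ref{additive dynamics}, but instead proves the block-entropy identity $H_{m}^{(\times)}=H_{m-1}^{(-)}+H_{2}^{(\times)}-H_{1}^{(-)}$, reduces $k_{2m+1}^{(-)}(t)$ to $-H_{2m+1}^{(-)}+2H_{2m-1}^{(-)}-H_{2m-3}^{(-)}$, and then uses the one-to-one map between odd blocks ending in an updating state at time $t$ and blocks ending in a quiescent state at time $t-1$). I checked your staircase decomposition: the supports from Eqs.~(\ref{u-state},~\ref{q-state}) do split into $2m$ disjoint boundary pairs plus one core of length $2t-2m-1$ precisely when $t\ge m+1$, conditioning on the middle does fix all pairs except $b_m,b_{m+1}$ together with $b_m+C+b_{m+1}$, and the resulting conditional mutual information depends only on $t-m$. (Minor slip: your closed form is twice the correct value of $k_{2m+1}(t)$ because you dropped the weight $1/2$ for the block-type in Eq.~(\ref{k_m-2}); compare with Proposition~\ref{Prop:k_3} via $k_{2m+1}(t)=k_3(t-m+1)$. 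This constant does not affect the $t-m$ invariance.)

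The genuine gap is the complementary regime, and specifically the boundary case $t=m$, which lies squarely within the proposition's hypotheses ($t>0$, $m>1$). You assert that for $t\le m$ the endpoints couple only to mutually independent bits and hence $I(x_1;x_{2m+1}\mid x_2,\dots,x_{2m})=0$, and likewise $k_{2m-1}(t-1)=0$. Both claims are false at $t=m$: there $k_{2m+1}(m)=k_3(1)=S[P^{(2)}]-\tfrac12 S[P^{(1)}]-\tfrac12 S[P^{(3)}]>0$ for $0<p<1/2$ (e.g.\ $m=2$, $t=2$: the endpoints $\xi_0+\xi_1+\xi_2$ and $\xi_4+\xi_5+\xi_6$ are unconditionally independent, but conditioning on the middle, which reveals $\xi_2+\xi_3+\xi_4$ among other things, induces dependence --- the ``explaining away'' effect). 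The structural reason your argument breaks is that at $t=m$ the core is empty and the innermost left and right boundary pairs overlap in one site, so the independent-decomposition premise fails; unconditional independence of the endpoints does not imply conditional independence given the middle. For $t\le m-1$ the conclusion that both sides vanish happens to be true, but it needs an actual argument (e.g.\ that the conditioning variables factor into a left part and a right part built from disjoint, hence independent, blocks of $\xi$'s), not the one you gave, since the same glib reasoning fails one step later at $t=m$. To complete the proof you must either treat $t=m$ by a separate direct computation showing both sides equal the same nonzero quantity, or adopt an argument, like the paper's bijection between time-$t$ and time-$(t-1)$ blocks, that is uniform in $t$ and $m$.
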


\begin{proof}
Since, from Eq.~(\ref{q_no-corr}), $k_{2m+1}^{(\times)}(t)=0$, we need to show that $k_{2m+1}^{(-)}(t)=k_{2m-1}^{(-)}(t-1)$. We start with the observation that, for $m\ge3$,
\begin{align}
H_{m}^{(\times)}=H_{m-1}^{(-)} + H_{2}^{(\times)} - H_{1}^{(-)} \;, \label{Hmx-split}
\end{align}
which follows from the fact that the probability in the block entropy can be written, in the case of even $m$, $p(\underline s_1,...,\underline s_{m-1}, s_m) = p(\underline s_1,...,\underline s_{m-1}) p(s_m | \underline s_1,...,\underline s_{m-1})$. From the previous observation, Eq.~(\ref{q_no-corr}), we find that $p(\underline s_1,...,\underline s_{m-1}, s_m) = p(\underline s_1,...,\underline s_{m-1})\, p(\underline s_{m-1},s_m)\, /\, p(\underline s_{m-1})$, and this results in the entropy above. The same argument goes for odd $m$.

We now rewrite the correlation information $k_{2m+1}^{(-)}(t)$ in terms of the block entropies,
\begin{align}
k_{2m+1}^{(-)}(t) &= -H_{2m+1}^{(-)} + H_{2m}^{(-)}+H_{2m}^{(\times)} - H_{2m-1}^{(\times)} = \nonumber \\
&=-H_{2m+1}^{(-)} +2 H_{2m}^{(\times)} - \big(H_{2m-2}^{(-)} + H_{2}^{(\times)} - H_{1}^{(-)}\big) = \nonumber \\
&=-H_{2m+1}^{(-)} +2 \big(H_{2m-1}^{(-)}+ H_{2}^{(\times)} - H_{1}^{(-)}\big) - \big(H_{2m-3}^{(-)} +2\big(H_{2}^{(\times)} - H_{1}^{(-)}\big)\big)= \nonumber  \\
&=-H_{2m+1}^{(-)} +2 H_{2m-1}^{(-)} - H_{2m-3}^{(-)}  \;, \label{k-at-t}
\end{align}
where we have used Eqs.~(\ref{H2m}, \ref{Hmx-split}).

The block entropies are now in a form that makes it possible to transfer the expression to a corresponding one for the previous time step $t-1$. We denote the block entropies at that time step with $H_{m}'$. The symbol sequences considered in Eq.~(\ref{k-at-t}) are all of odd length and of the form $(\underline s_1, s_2,...,s_{2m}, \underline s_{2m+1})$. Since the end states in the sequence are quiescent states at $t-1$, they are just copied, and we note that there is a one-to-one mapping from the sequence at time $t$ to a corresponding $(2m+1)$-length sequence at time $t-1$: $(s_1, \underline s_2',...,\underline s_{2m}', s_{2m+1})$, but with the opposite arrangement of updating and quiescent states. The updating states at time $t-1$ may of course have different spin states. The one-to-one mapping, though, results in that the corresponding block entropies are the same, i.e., $H_{2m+1}^{(-)}=H_{2m+1}'^{(\times)}$. The correlation information $k_{2m+1}^{(-)}(t)$ can then be expressed in terms of block entropies at $t-1$,
\begin{align}
k_{2m+1}^{(-)}(t) &= -H_{2m+1}'^{(\times)} +2 H_{2m-1}'^{(\times)} - H_{2m-3}'^{(\times)}  \;.
\end{align}
As above, we can reduce the block length of $H^{(\times)}$ entropies,
\begin{align}
k_{2m+1}^{(-)}(t) &= -H_{2m}'^{(-)} - H_{2}'^{(\times)} + H_{1}'^{(-)}  +2 \big(H_{2m-2}'^{(-)} + H_{2}'^{(\times)} - H_{1}'^{(-)} \big) - H_{2m-3}'^{(\times)}  = \nonumber \\
&= -H_{2m}'^{(\times)} +2 H_{2m-2}'^{(-)} - H_{2m-3}'^{(\times)} + H_{2}'^{(\times)} - H_{1}'^{(-)} = \nonumber \\
&= -H_{2m-1}'^{(-)} +2 H_{2m-2}'^{(-)} - H_{2m-3}'^{(\times)}  = \nonumber \\
&= k_{2m-1}^{(-)}(t-1) \;.
\end{align}
where we have used Eqs.~(\ref{km_}, \ref{H2m}, \ref{Hmx-split}). This concludes the proof.  \\
\end{proof}

\begin{lemma} \label{Lemma:H_1}
The entropy $H_1(t)$ of a single site is given by:
\begin{align}
H_1(0) &= S[P^{(1)}] + \log 2 = h + \log 2 \;,  \\
H_1(t) &= \frac{1}{2} S[P^{(2t+1)}] + \frac{1}{2} S[P^{(2t-1)}] + \log 2 \;   \text{  ,     for $t>0$ .}
\end{align}
\end{lemma}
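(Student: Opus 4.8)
The plan is to compute $H_1(t)$ directly from the single-site distribution $P_1(t)$ supplied by Eqs.~(\ref{px},~\ref{p_}) and to exploit the fact that the four-symbol alphabet $\{0,1,\underline0,\underline1\}$ splits naturally into a \emph{structure} bit (quiescent versus updating) and a \emph{spin} bit. First I would collect the four single-site probabilities in paired form, writing $p(0,t)=\tfrac12\big(1-f_\text{odd}^{(2t+1)}(p)\big)$ and $p(1,t)=\tfrac12 f_\text{odd}^{(2t+1)}(p)$ for the quiescent states, and $p(\underline0,t)=\tfrac12\big(1-f_\text{odd}^{(2t-1)}(p)\big)$ and $p(\underline1,t)=\tfrac12 f_\text{odd}^{(2t-1)}(p)$ for the updating states. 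The decisive observation is that the two quiescent probabilities sum to $1/2$ and the two updating probabilities likewise sum to $1/2$, so the structure bit is uniformly distributed and the spin bit has a structure-dependent conditional law.

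Next I would apply the grouping (recursivity) property of the Shannon entropy to this partition into a quiescent pair and an updating pair, each of total weight $1/2$. This gives
\begin{align}
H_1(t) = S[\{\tfrac12,\tfrac12\}] + \tfrac12\, S[\{f_\text{odd}^{(2t+1)}(p),\,1-f_\text{odd}^{(2t+1)}(p)\}] + \tfrac12\, S[\{f_\text{odd}^{(2t-1)}(p),\,1-f_\text{odd}^{(2t-1)}(p)\}] \;. \nonumber
\end{align}
The first term equals $\log 2$, and recognising the two conditional spin distributions as $P^{(2t+1)}$ and $P^{(2t-1)}$ through the definition in Eq.~(\ref{P-odd}) yields the stated formula for $t>0$. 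Equivalently this is just the chain rule $H_1(t)=H(\text{structure})+H(\text{spin}\mid\text{structure})$ with a uniform structure bit, which also explains the ubiquitous $\log 2$ offset.

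The only case needing separate care—and thus the sole subtlety rather than a genuine obstacle—is $t=0$, where the formal substitution $n=2t-1=-1$ leaves the admissible range of $f_\text{odd}^{(n)}$ and $P^{(-1)}$ is undefined. Here I would read the conditional laws off the initial data directly: at $t=0$ the spin at every site is the Bernoulli variable $\xi_i$, equal to $1$ with probability $p$ regardless of whether the site is updating or quiescent, so both conditional spin distributions collapse to $\{p,1-p\}=P^{(1)}$. The grouping identity then reduces to $H_1(0)=\log 2 + S[P^{(1)}]$, and since $S[P^{(1)}]=S[\{p,1-p\}]=h$ by Eq.~(\ref{h-bernoulli}), this reproduces the first line of the Lemma. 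In short, the computation is a one-line application of the entropy grouping axiom once the single-site probabilities are written in paired form, with the $t=0$ degeneracy handled by hand.
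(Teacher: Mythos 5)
Your proposal is correct and follows essentially the same route as the paper: the paper likewise factors the single-site distribution into the uniform $\{\tfrac12,\tfrac12\}$ updating/quiescent structure times the conditional spin distributions $P^{(2t\pm1)}$ (and $P^{(1)}$ at $t=0$), which immediately gives the stated entropies. Your version merely makes the grouping/chain-rule step and the $t=0$ degeneracy explicit.
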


\begin{proof}
The single site distribution is given by the $\{1/2, 1/2\}$ distribution for updating and quiescent states multiplied with the distribution giving the probability for spin state 0 or 1 for the updating and quiescent states, $P^{(2t-1)}=\{f_\text{odd}^{(2t-1)}(p), 1-f_\text{odd}^{(2t-1)}(p) \}$ and $P^{(2t+1)}=\{f_\text{odd}^{(2t+1)}(p), 1-f_\text{odd}^{(2t+1)}(p) \}$ , respectively, in the case when $t>0$. For $t=0$, we instead have $P^{(1)}=\{p,1-p\}$ for the spin state. This directly results in the Lemma.  \\
\end{proof}

\begin{lemma} \label{Lemma:H_2}
The 2-length block entropy $H_2(t)$ is given by:
\begin{align}
H_2(0) &= 2S[P^{(1)}] + \log 2 = 2 h + \log 2 \;, \\
H_2(t) &= S[P^{(2)}] + S[P^{(2t-1)}] + \log 2 \;\text{  ,     for $t>0$ .}
\end{align}
\end{lemma}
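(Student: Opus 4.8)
The plan is to exploit the additive representation of Proposition~\ref{additive dynamics} together with the alternating marker structure, reducing $H_2(t)$ to the joint entropy of one updating spin and its quiescent neighbour, which decouples into two independent parity sums. I would first dispose of $t=0$: there a $2$-block consists of two genuinely independent Bernoulli spins carried by their fixed markers, so the spin pair contributes $2S[P^{(1)}]=2h$ while the choice of block phase contributes $\log 2$, giving $H_2(0)=2S[P^{(1)}]+\log 2$.

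For $t>0$ I would separate the deterministic marker information from the spin information. By spatial stationarity a randomly positioned $2$-block is of type $(\underline s_1,s_2)$ or of type $(s_1,\underline s_2)$, each with probability $1/2$; since these two types have disjoint support (they are distinguished by their markers), the grouping property of entropy gives $H_2(t)=\log 2+\tfrac12 H^{\mathrm{I}}+\tfrac12 H^{\mathrm{II}}$, where $H^{\mathrm{I}},H^{\mathrm{II}}$ are the entropies of the spin pair conditioned on each phase. This is exactly the split $H_2=H_2^{(-)}+H_2^{(\times)}$ induced by the definition~(\ref{Hq-block}), and by the reflection symmetry~(\ref{H2m}) with $m=1$ we have $H_2^{(-)}=H_2^{(\times)}$, i.e.\ $H^{\mathrm{I}}=H^{\mathrm{II}}$, so it suffices to evaluate a single phase.

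The core step is the phase $(\underline s_{m-1,t},s_{m,t})$. By Proposition~\ref{additive dynamics}, $\underline s_{m-1,t}=\sum_{i'\in I_{m-1,t-1}}\xi_{i'}$ is a parity of the $2t-1$ initial spins on $\{m-t,\dots,m+t-2\}$, and by Eq.~(\ref{state_x}) one has $s_{m,t}=\underline s_{m-1,t}+\xi_{m+t-1}+\xi_{m+t}$ (mod 2). The two extra variables $\xi_{m+t-1},\xi_{m+t}$ lie outside the index set of $\underline s_{m-1,t}$, so $A:=\underline s_{m-1,t}$ and $B:=\xi_{m+t-1}+\xi_{m+t}$ (mod 2) are independent, with $A\sim P^{(2t-1)}$ and $B\sim P^{(2)}$. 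Since $(A,s_{m,t})=(A,A+B)$ is an invertible mod-$2$ recoding of $(A,B)$, the spin-pair entropy equals $S[P^{(2t-1)}]+S[P^{(2)}]$ by independence, and substituting into the split yields $H_2(t)=S[P^{(2)}]+S[P^{(2t-1)}]+\log 2$.

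The only genuinely delicate point is the bookkeeping of which initial variables enter each state and verifying their disjointness; once the overlap relation $s_{m,t}=\underline s_{m-1,t}+\xi_{m+t-1}+\xi_{m+t}$ is in hand (as in Eq.~(\ref{state_x})), the independence and the bijection are immediate, and everything else is the entropy grouping already used to extract the $\log 2$. I expect no obstacle beyond this indexing, since the additive structure makes the two contributions $P^{(2t-1)}$ and $P^{(2)}$ fall out directly.
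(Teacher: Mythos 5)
Your proposal is correct and follows essentially the same route as the paper: both split off the $\log 2$ from the marker phase, reduce to a single phase $(\underline s_{i-1},s_i)$ by symmetry, and use the overlap relation $s_{i,t}=\underline s_{i-1,t}+\xi_{i+t-1}+\xi_{i+t}$ to obtain $S[P^{(2t-1)}]$ for the updating state and $S[P^{(2)}]$ for the quiescent state conditioned on it. Your phrasing via the invertible recoding $(A,A+B)\leftrightarrow(A,B)$ of independent parities is just a restatement of the paper's observation that the conditional distribution of $s_i$ given $\underline s_{i-1}$ is $P^{(2)}$ regardless of the conditioning value.
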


\begin{proof}
For $t=0$, we directly get the result of Lemma \ref{Lemma:H_2}
\begin{align}
H_2(0) &= 2 h + \log 2  \;,
\end{align}
with the $2h$ from the Bernoulli process and the $\log 2$ from the periodic structure, i.e., the two updating/quiescent possibilities $(-, \times)$ and $(\times, -)$ for pairs.

We can express probabilities for a pair of adjacent states, $(\underline s_{i-1}, s_i)$, at time $t$, by $p_t(\underline s_{i-1}, s_i) = p_t(s_i | \underline s_{i-1})\;p_t(\underline s_{i-1})$. By symmetry, the other order, $(s_{i-1}, \underline s_i)$, gives the same contribution to the block entropy. The resulting entropy can then be expressed as the sum of a conditional entropy and a single cell entropy, 
\begin{align}
H_2(t) &= 2 \Big( \sum_{\underline s_{i-1}} p_t(\underline s_{i-1}) \sum_{s_{i}} p_t(s_i | \underline s_{i-1}) \log \frac{1}{p_t(s_i | \underline s_{i-1})} + \sum_{\underline s_{i-1}} p_t(\underline s_{i-1}) \log \frac{1}{p_t(\underline s_{i-1})} \Big)  \;.
\end{align}
The last sum uses $p_t(\underline s)$ being a non-normalized distribution $\{\frac{1}{2} f_\text{odd}^{(2t-1)},\frac{1}{2}-\frac{1}{2} f_\text{odd}^{(2t-1)} \}$, from Eq.~(\ref{p_}), resulting in $\frac{1}{2}(\log 2 + S[P^{(2t-1)}])$. 
The conditional probabilities of the first sum are derived from Eq.~(\ref{state_x}), showing that $p_t(s_i | \underline s_{i-1})$ is a distribution determined by the sum of two independent stochastic variables, which results in an entropy $S[P^{(2)}]$, regardless of $\underline s_{i-1}$. The sum over $\underline s_{i-1}$ then gives a factor of $\frac{1}{2}$.  This results in
\begin{align}
H_2(t) &= 2 \Big( \frac{1}{2} S[P^{(2)}] + \frac{1}{2} \log 2 + \frac{1}{2} S[P^{(2t-1)}] \Big)  \;,
\end{align}
which gives us the Lemma.  \\
\end{proof}

\begin{proposition} \label{Prop:k_2}
The pair correlation between neighbour states $k_2(t)$ is given by
\begin{align}
k_2(0) &= \log 2 \;, \\
k_2(t) &=  S[P^{(2t+1)}] -  S[P^{(2)}] + \log 2 \; \text{  ,     for $t>0$ .} \label{k_2(t)}
\end{align}
\end{proposition}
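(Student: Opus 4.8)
The plan is to obtain $k_2(t)$ directly from the two preceding lemmas, since the quantity $k_2(t)$ was already defined in closed form as a combination of block entropies. Recall from Eq.~(\ref{k_m}) and the accompanying text that the neighbouring pair correlation information is $k_2(t) = 2H_1(t) - H_2(t)$. Hence no new probabilistic analysis is required: the entire content of the Proposition follows by substituting the expressions for $H_1(t)$ from Lemma~\ref{Lemma:H_1} and for $H_2(t)$ from Lemma~\ref{Lemma:H_2} and simplifying. I would treat the cases $t=0$ and $t>0$ separately, matching the case split already present in those lemmas.

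For $t>0$ I would insert the two formulas to get
\begin{align}
k_2(t) &= 2\Big( \tfrac{1}{2} S[P^{(2t+1)}] + \tfrac{1}{2} S[P^{(2t-1)}] + \log 2 \Big) - \Big( S[P^{(2)}] + S[P^{(2t-1)}] + \log 2 \Big) \;. \nonumber
\end{align}
The key observation is that the term $S[P^{(2t-1)}]$, which carries the ``long'' odd-length contribution, appears with coefficient $+1$ from $2H_1(t)$ and $-1$ from $H_2(t)$, so it cancels exactly, leaving only $S[P^{(2t+1)}]$ from $H_1(t)$ and the $-S[P^{(2)}]$ from $H_2(t)$. The constant offsets combine as $2\log 2 - \log 2 = \log 2$, yielding $k_2(t) = S[P^{(2t+1)}] - S[P^{(2)}] + \log 2$, as claimed.

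For the boundary case $t=0$ I would use the $t=0$ rows of the two lemmas, giving $k_2(0) = 2(h + \log 2) - (2h + \log 2)$; here the initial Bernoulli entropy $2h$ cancels against $2h$ and one recovers $k_2(0) = \log 2$, consistent with the fact that at $t=0$ the only nearest-neighbour correlation is the deterministic alternation of updating and quiescent markers. There is essentially no obstacle in this Proposition itself; all of the genuine work has been discharged in Lemmas~\ref{Lemma:H_1} and \ref{Lemma:H_2}. The only point demanding care is bookkeeping of the additive $\log 2$ terms coming from the periodic updating/quiescent structure, and confirming that the single ``long'' entropy $S[P^{(2t-1)}]$ drops out so that the residual correlation information depends on $t$ only through the short-range piece $S[P^{(2t+1)}] - S[P^{(2)}]$.
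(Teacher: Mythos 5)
Your proposal is correct and follows essentially the same route as the paper: the authors likewise invoke $k_2 = -H_2 + 2H_1$ and substitute Lemmas~\ref{Lemma:H_1} and \ref{Lemma:H_2}, with the $t=0$ case handled by the same observation about the alternating updating/quiescent structure. Your explicit cancellation of the $S[P^{(2t-1)}]$ terms just spells out what the paper leaves as ``immediately result in the Proposition.''
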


\begin{proof}
For $t=0$, we only have correlation from the alternating updating and quiescent cells structure, i.e., $\log 2$. For $t>0$ we use the definition, $k_2 = -H_2 + 2H_1$. Lemmas \ref{Lemma:H_1} and \ref{Lemma:H_2} then immediately result in the Proposition.   \\
\end{proof}

\begin{proposition} \label{Prop:k_3}
The correlation $k_3(t)$ over blocks of length 3 is given by,
\begin{align}
k_3(0) &=0 \;, \\
k_3(1) &= - \frac{1}{2} S[P^{(3)}] + S[P^{(2)}] - \frac{1}{2} S[P^{(1)}] \;, \\
k_3(t) &=  - \frac{1}{2} S[P^{(2t+1)}] + S[P^{(2t-1)}] - \frac{1}{2} S[P^{(2t-3)}]  \; \text{  ,     for $t>1$ .}
\end{align}
\end{proposition}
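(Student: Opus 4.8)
The plan is to reduce $k_3(t)$ to a single block entropy that has not yet been computed, namely $H_3^{(-)}(t)$, and then evaluate that entropy by recognising the three relevant spins as an invertible linear image of three mutually independent parity variables.

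First I would invoke the Observation in the proof of Proposition~\ref{Prop:k_2m} that $k_m^{(\times)}(t)=0$ for $m>2$; in particular $k_3^{(\times)}(t)=0$, so $k_3(t)=k_3^{(-)}(t)$. Expanding $k_3^{(-)}$ via Eq.~(\ref{km_}) and using the symmetry $H_2^{(-)}=H_2^{(\times)}$ from Eq.~(\ref{H2m}) gives $k_3(t)=-H_3^{(-)}(t)+2H_2^{(\times)}(t)-H_1^{(\times)}(t)$. The two smaller entropies are already available: by the same bookkeeping used in Lemmas~\ref{Lemma:H_1} and \ref{Lemma:H_2} together with phase symmetry, $H_2^{(\times)}(t)=\tfrac12 H_2(t)=\tfrac12\log 2+\tfrac12 S[P^{(2)}]+\tfrac12 S[P^{(2t-1)}]$, while the quiescent single-site spin distribution is $P^{(2t+1)}$ carried with the phase weight $\tfrac12$, so $H_1^{(\times)}(t)=\tfrac12\log 2+\tfrac12 S[P^{(2t+1)}]$. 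Everything thus reduces to computing $H_3^{(-)}(t)$.

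The key step is the evaluation of $H_3^{(-)}(t)$, the entropy of a block $(\underline s_1,s_2,\underline s_3)$ with the pattern updating--quiescent--updating. Factoring out the phase weight I would write $H_3^{(-)}(t)=\tfrac12\log 2+\tfrac12\tilde H_3(t)$, where $\tilde H_3$ is the joint entropy of the three spins $\sigma_1=\underline s_1,\ \sigma_2=s_2,\ \sigma_3=\underline s_3$. By Proposition~\ref{additive dynamics} these are sums modulo $2$ of the $\xi$'s over $I_{1,t-1},I_{2,t},I_{3,t-1}$. For $t>1$ I would split the union of these positions into three disjoint blocks: the common core $A=\sum_{k=4-t}^{t}\xi_k$ of length $2t-3$, the left pair $B=\xi_{2-t}+\xi_{3-t}$, and the right pair $C=\xi_{t+1}+\xi_{t+2}$, so that $\sigma_1=A+B$, $\sigma_2=A+B+C$, $\sigma_3=A+C$. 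Since $A,B,C$ are independent with distributions $P^{(2t-3)},P^{(2)},P^{(2)}$ and the map $(A,B,C)\mapsto(\sigma_1,\sigma_2,\sigma_3)$ is invertible modulo $2$, entropy is preserved and additive, giving $\tilde H_3=S[P^{(2t-3)}]+2S[P^{(2)}]$ and hence $H_3^{(-)}(t)=\tfrac12\log 2+S[P^{(2)}]+\tfrac12 S[P^{(2t-3)}]$. Substituting this and the two smaller entropies into $k_3=-H_3^{(-)}+2H_2^{(\times)}-H_1^{(\times)}$ collapses the $\log 2$ and $S[P^{(2)}]$ contributions and yields the claimed formula for $t>1$.

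Finally I would dispatch the boundary cases. For $t=1$ the core $A$ is empty (the formula would require $P^{(-1)}$), so I replace the decomposition by the direct observation that $\sigma_1=\xi_1,\ \sigma_2=\xi_1+\xi_2+\xi_3,\ \sigma_3=\xi_3$ is an invertible image of the independent spins $\xi_1,\xi_2,\xi_3$, giving $\tilde H_3=3S[P^{(1)}]$; the same substitution then produces the stated $k_3(1)$. For $t=0$ the spins are independent and the updating/quiescent marker is deterministic, so conditioning on the middle site fixes the phase and leaves the outer spins independent, whence the conditional mutual information $k_3(0)=0$. I expect the main obstacle to be the correct identification of the three disjoint generating parities and the verification that the resulting map is invertible modulo $2$, together with careful tracking of the phase weights $\tfrac12\log 2$; the degenerate interval lengths at $t=0,1$ are the other place where care is needed.
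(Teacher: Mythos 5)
Your proof is correct, but it takes a genuinely different route from the paper's. The paper never computes $H_3^{(-)}$ directly: it argues that, by Propositions~\ref{Prop:k_2m} and \ref{Prop:k_2m+1}, all correlation information in blocks of length $\ge 3$ is transported outward, so conservation of $k_\text{corr}$ (equivalently of $h$) forces $k_3(t)=k_1(t-1)+k_2(t-1)-k_1(t)-k_2(t)$, which is then evaluated from Lemma~\ref{Lemma:H_1} and Proposition~\ref{Prop:k_2}. You instead compute $k_3(t)=k_3^{(-)}(t)=-H_3^{(-)}+2H_2^{(\times)}-H_1^{(\times)}$ head-on, evaluating $H_3^{(-)}$ by writing the three spins $(\underline s_1,s_2,\underline s_3)$ as an invertible $\mathrm{GF}(2)$-linear image of the three independent parities $A=\sum_{k=4-t}^{t}\xi_k$, $B=\xi_{2-t}+\xi_{3-t}$, $C=\xi_{t+1}+\xi_{t+2}$ (I checked: the intervals from Proposition~\ref{additive dynamics} decompose exactly as you claim, the transfer matrix has determinant $1$ over $\mathrm{GF}(2)$, and the resulting $\tilde H_3=S[P^{(2t-3)}]+2S[P^{(2)}]$ reproduces the stated formulas for $t>1$, with your separate treatments of $t=1$ and $t=0$ also checking out). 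Your argument is more self-contained --- it needs only Proposition~\ref{additive dynamics}, the observation $k_m^{(\times)}=0$, and the phase symmetry $H_2^{(-)}=H_2^{(\times)}$, and in particular it does not rely on Proposition~\ref{Prop:k_2m+1} or on conservation of the entropy density, so it provides an independent consistency check of those results. What the paper's route buys is economy and interpretation: it reuses already-proven transfer results and makes manifest that $k_3(t)$ is precisely the correlation information freshly expelled from lengths $1$ and $2$ in the last time step, which is the physical picture the paper is building.
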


\begin{proof}
For $t=0$, there is obviously no correlation over blocks larger than 2. 

For larger time steps, we use the fact that in total all correlation information is conserved in the time evolution (since the entropy density $h$ is). Propositions \ref{Prop:k_2m} and \ref{Prop:k_2m+1} imply that all correlation information from blocks of length $m$, with $m \ge 3$, is transferred to longer distances, $m+2$, in the next time step. Therefore, the correlation of blocks over length 3, $k_3(t)$, must come from the change in $k_1 + k_2$ in the last time step, so that $k_\text{corr}$ is conserved,
\begin{align}
k_3(t) &=  k_1(t-1)+k_2(t-1)-k_1(t)-k_2(t) \;.
\end{align}
For $t=1$, Lemma 1 and Proposition \ref{Prop:k_2}
%, with Eqs.~(), 
results in
\begin{align}
k_3(1) &=  \log 4 - H_1(0) + \log 2 - \big(\log 4 - H_1(1) + S[P^{(3)}] -  S[P^{(2)}] + \log 2 \big)  =  \nonumber \\ 
&=  - S[P^{(1)}] - \log 2 + \frac{1}{2} S[P^{(3)}] + \frac{1}{2} S[P^{(1)}] + \log 2 - S[P^{(3)}] +  S[P^{(2)}]  =  \nonumber \\ 
&= -\frac{1}{2} S[P^{(3)}] + S[P^{(2)}] - \frac{1}{2} S[P^{(1)}] \;.
\end{align}
Finally, for $t>1$, Lemma 1 and Proposition \ref{Prop:k_2}, gives us,
\begin{align}
k_3(t) &=  S[P^{(2t-1)}] - S[P^{(2t+1)}] + H_1(t) - H_1(t-1) = \nonumber \\
&= - \frac{1}{2} S[P^{(2t+1)}] + S[P^{(2t-1)}] - \frac{1}{2} S[P^{(2t-3)}]   \;,
\end{align}
which concludes the proof.  \\
\end{proof}

\noindent
We note that Propositions \ref{Prop:k_2m}-\ref{Prop:k_3} imply that the state at time $t$ is characterized by a maximum information-theoretic correlation length of $2t+1$, and thus a Markov process of memory $2t$. Since we now have closed form expressions for all correlation information contributions, we can derive a closed form expression also for the excess entropy.   \\

\begin{proposition} \label{Prop:eta-new}
The excess entropy, $\eta(t)$, is linearly increasing in time after the first time step.
\begin{align}
\eta(0) &= \log 2 \;, \\
\eta(t) &= (2t - 1) \zeta + \log 2 \;, \text{ for $t>0$, } \label{Prop-eta}
\end{align}
where the constant $\zeta$ is determined by
\begin{align}
\zeta &=  S[P^{(2)}]-S[P^{(1)}] \; . \label{zeta} 
\end{align}
\end{proposition}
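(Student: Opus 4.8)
The plan is to start from the series representation $\eta(t) = \sum_{m=2}^\infty (m-1)k_m(t)$ of Eq.~(\ref{eta}) and to exploit the fact that almost all correlation terms either vanish or are simple time-shifts of one another. First I would use Proposition~\ref{Prop:k_2m} to discard every even-length term beyond $m=2$, which collapses the series to
\begin{align}
\eta(t) = k_2(t) + \sum_{m=1}^\infty 2m\, k_{2m+1}(t) \;.
\end{align}
Then I would iterate the shift relation of Proposition~\ref{Prop:k_2m+1}, $k_{2m+1}(t)=k_{2m-1}(t-1)$, down to the base case, obtaining $k_{2m+1}(t)=k_3(t-m+1)$, which is nonzero only for $1 \le m \le t$ because the maximal correlation length at time $t$ is $2t+1$. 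Reindexing the resulting finite sum by $r=t-m+1$ turns the weight $2m$ into $2(t-r+1)$, and since $t-r+1$ counts the integers $s$ with $r \le s \le t$, the double sum rearranges into $\eta(t)=k_2(t)+2\sum_{s=1}^t K_3(s)$, where $K_3(s)=\sum_{r=1}^s k_3(r)$ is the partial sum of the distance-$3$ correlations.

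The key simplification I would then exploit is that the closed form for $k_3$ in Proposition~\ref{Prop:k_3} is, for $r>1$, exactly a scaled discrete second difference of the sequence $b_r:=S[P^{(2r-1)}]$, namely $k_3(r)=-\tfrac12(b_{r+1}-2b_r+b_{r-1})$. Writing this as $\tfrac12(d_{r-1}-d_r)$ with $d_r=b_{r+1}-b_r$ makes the partial sum telescope; after adding the special base term $k_3(1)$ and checking that its value is consistent with the same formula, one obtains the compact expression
\begin{align}
K_3(s) = \zeta - \tfrac12\big(S[P^{(2s+1)}] - S[P^{(2s-1)}]\big) \;, \label{K3-plan}
\end{align}
valid for all $s \ge 1$, with $\zeta=S[P^{(2)}]-S[P^{(1)}]$ as in Eq.~(\ref{zeta}).

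Finally I would substitute (\ref{K3-plan}) into $\eta(t)=k_2(t)+2\sum_{s=1}^t K_3(s)$. The constant part produces the linear term $2t\zeta$, while the difference part telescopes a second time, since $\sum_{s=1}^t\big(S[P^{(2s+1)}]-S[P^{(2s-1)}]\big)=S[P^{(2t+1)}]-S[P^{(1)}]$. Inserting the explicit $k_2(t)=S[P^{(2t+1)}]-S[P^{(2)}]+\log 2$ from Proposition~\ref{Prop:k_2}, the two $S[P^{(2t+1)}]$ contributions cancel, the combination $S[P^{(1)}]-S[P^{(2)}]=-\zeta$ merges with the linear term, and what survives is precisely $\eta(t)=(2t-1)\zeta+\log 2$; the separately trivial case $t=0$ gives $\eta(0)=k_2(0)=\log 2$.

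I expect the main obstacle to be bookkeeping rather than any deep idea: one must track the finitely many nonzero terms correctly, confirm that the boundary contributions of the two telescoping sums (first for $K_3$, then for $\sum_s K_3$) land on exactly the arguments $2t+1$ and $1$, and verify that the irregular base cases $k_3(1)$ and $k_2(0)$ agree with the generic formulas so that no stray term spoils the cancellation. The delicate point is that the term $S[P^{(2t+1)}]$ — which by itself drifts toward $\log 2$ and would otherwise render $\eta$ nonlinear — must cancel exactly between $k_2(t)$ and the telescoped sum, leaving only the remainder that is strictly linear in $t$.
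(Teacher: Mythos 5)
Your proposal is correct and follows essentially the same route as the paper: both reduce $\eta(t)$ via Propositions~\ref{Prop:k_2m} and \ref{Prop:k_2m+1} to $k_2(t) + 2\sum_{t'=1}^{t}(t-t'+1)k_3(t')$ and then exploit the fact that $k_3(r)$ is a second difference of $S[P^{(2r-1)}]$ so that everything telescopes against $k_2(t)$. The only difference is cosmetic --- you evaluate the double sum in closed form by telescoping twice, while the paper computes the first difference $\eta(t+1)-\eta(t)=2\zeta$ and anchors it at $\eta(1)=\zeta+\log 2$.
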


\begin{proof}
We use the form of $\eta$ which is a weighted sum over correlation information contributions, Eq.~(\ref{eta}), i.e., $\eta=\sum_m (m-1)k_m$. For $t=0$ we only have contribution from $k_2(0)=\log 2$, which gives $\eta(0)=\log 2$.

For $t=1$, Propositions \ref{Prop:k_2} and \ref{Prop:k_3} result in
\begin{align}
\eta(1) &= k_2(1) + 2k_3(1) = S[P^{(3)}] -  S[P^{(2)}] + \log 2 + 2 \big( - \frac{1}{2} S[P^{(3)}] + S[P^{(2)}] - \frac{1}{2} S[P^{(1)}] \big) = \nonumber \\
&=S[P^{(2)}] - S[P^{(1)}] + \log 2 = \zeta + \log 2  \;. \label{eta(1)}
\end{align}
For $t\ge 1$, we note from Propositions \ref{Prop:k_2m} and \ref{Prop:k_2m+1} that the excess entropy, using Eq.~(\ref{eta}), is a weighted sum over $k_2(t)$ and all $k_3(t')$ up to the current time step $(1\le t' \le t)$,
\begin{align}
\eta(t) &= k_2(t) + 2 k_3(t) + 4 k_5(t) + 6 k_7(t) + ... + 2t k_{2t+1}(t) = \nonumber \\
&= k_2(t) + 2 \sum_{t' = 1}^t (t-t'+1)k_3(t')  \;,
\end{align}
where we have used the result from Proposition \ref{Prop:k_2m+1} that $k_{2m+1}(t)=k_3(t-m+1)$. We can then derive an expression for the change of $\eta$ in one time step (for $t\ge 1$),
\begin{align}
\eta(t+1)-\eta(t) &= k_2(t+1)-k_2(t) + 2 \sum_{t' = 1}^{t+1} k_3(t')  \;.
\end{align}
By using Propositions \ref{Prop:k_2} and \ref{Prop:k_3}, we find that this sum results in
\begin{align}
\eta(t+1)-\eta(t) &=  2\big(S[P^{(2)}]-S[P^{(1)}] \big) = 2 \zeta \;.
\end{align}
In combination with Eq.~(\ref{eta(1)}), we then get Eq.~(\ref{Prop-eta}) of the Proposition. 
This concludes the proof.  \\
\end{proof}

\noindent
This result shows that we have an average correlation length (in the information-theoretic sense), cf. Eq.~(\ref{eta}), that increases linearly in time. The immediate implication is that some of the information that was initially detectable by looking at statistics over subsequences of shorter lengths is not any longer found at those length scales. 
How this relates to the approach towards a distribution closer to an equilibrium one is discussed in the following section.

%\subsection{Evolution towards local distributions indistinguishable from the equilibrium distribution}
\section{The microscopically reversible approach towards the equilibrium distribution}
\label{sec:Results}
The equilibrium distribution for a one-dimensional system can be derived by finding the distributions $P_m$, over $m$-length blocks, that maximise the entropy density $h$, Eq.~(\ref{entropy-density}), under an energy constraint. Here the initial density $p$ of spin up implies an energy density per site $u=-(1-2p)^2$.

Because of the dual relation between entropy density $h$ and correlation information $k_\text{corr}$, as is seen in Eq.~(\ref{info-decmoposition}), the equilibrium can also be determined by minimisation of $k_m$. With nearest neighbour interaction only, we can always choose $k_m=0$ for $m>2$, since the constraint need not give rise to higher order correlations \cite{Lindgren1987}. As is well known, this results in magnetisation 0, or $p(\uparrow)=p(\downarrow)=1/2$. With a normalization constraint this implies that $p(\uparrow \uparrow)=p(\downarrow \downarrow)=1/2(1-2p(\uparrow \downarrow))$, which with the energy constraint fully determines the distribution over pairs of spins. The solution is simply that $p(\uparrow \downarrow)=p(1-p)$, i.e., the same as the initial one determined by the Bernoulli process. This follows from the fact that if energy is to be conserved, then $p(\uparrow \uparrow)+p(\downarrow \downarrow)$ as well as $p(\uparrow \downarrow)$ must be conserved. 

So, in equilibrium, we have that $p(\uparrow \downarrow)=p(1-p)$ and $p(\uparrow \uparrow)=(p^2+(1-p)^2)/2$, and the same for the corresponding symmetric configurations. This distribution then determines the $2$-block entropy $H_{2,\text{eq}}$ and a single state entropy $H_{1,\text{eq}}=\log 2$. This results in an equilibrium entropy density, $h_\text{eq}$,
\begin{align}
h_\text{eq} &= H_{2,\text{eq}} - H_{1,\text{eq}} =\nonumber \\ 
&= -2p(1-p) \log (p(1-p)) - (p^2+(1-p)^2)\log((p^2+(1-p)^2)/2) - \log 2 = \nonumber \\
&= S[P^{(2)}] \;,
\end{align}
where we have used Eq.~(\ref{P-odd}). 
Note that the entropy density of the studied Q2R system is $h=S[\{p,1-p\}]=S[P^{(1)}]<S[P^{(2)}]=h_\text{eq}$.

\begin{figure}[h]
        \centering
        \begin{subfigure}[b]{0.48\textwidth}
                \includegraphics[width=\textwidth]{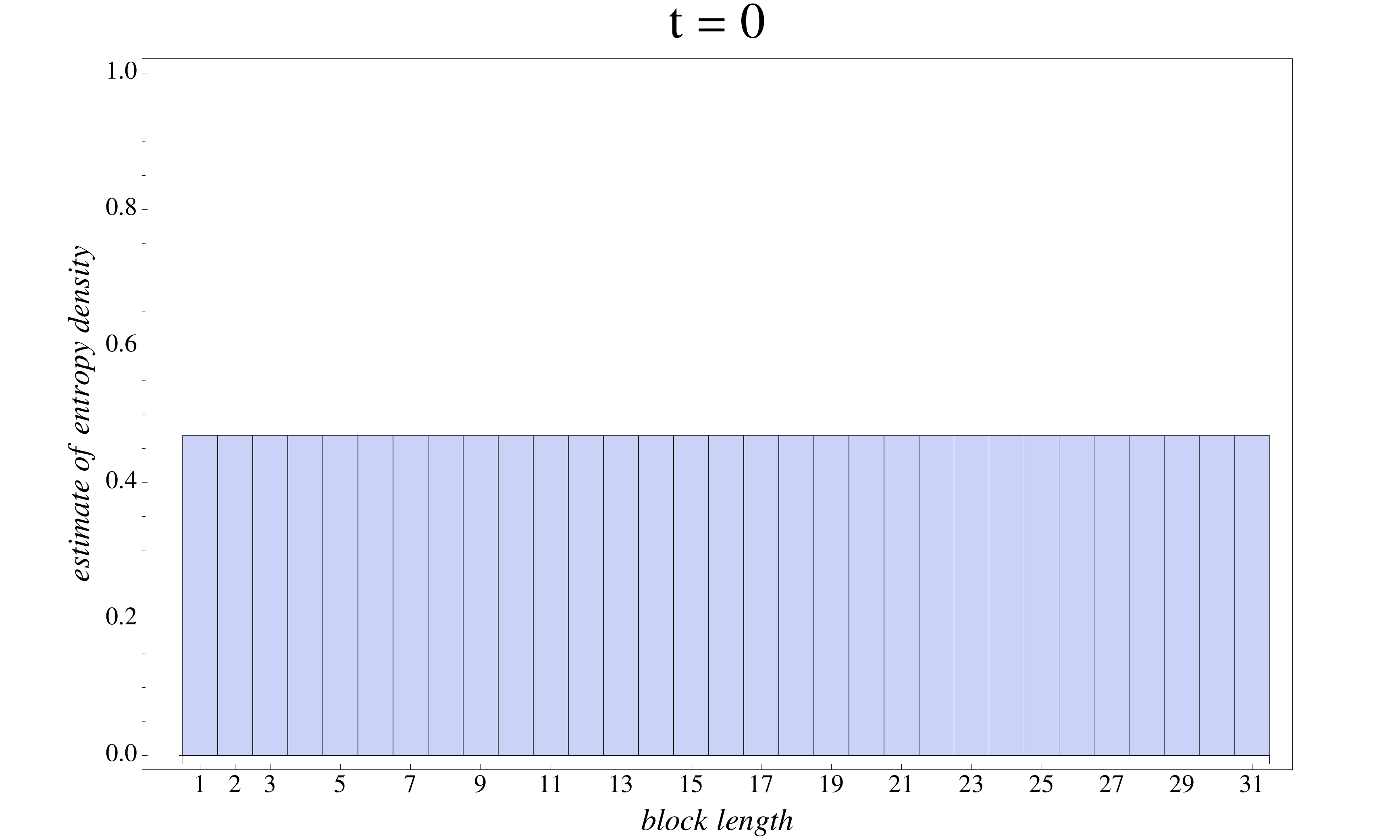}
                \caption{}
                \label{fig:st0}
        \end{subfigure}
        \begin{subfigure}[b]{0.48\textwidth}
                \includegraphics[width=\textwidth]{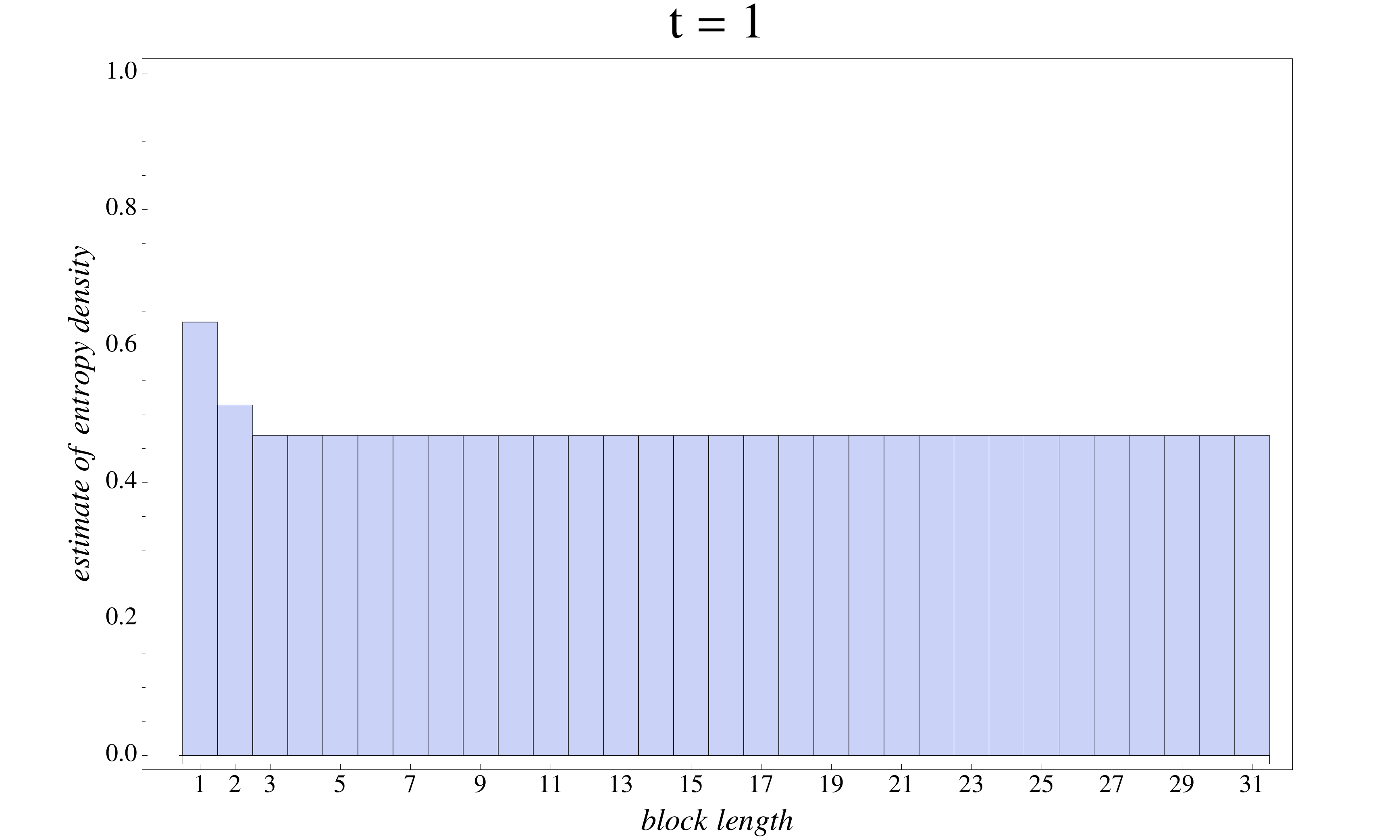}
                \caption{}
                \label{fig:st1}
        \end{subfigure}
        \\[12pt]
        \begin{subfigure}[b]{0.48\textwidth}
                \includegraphics[width=\textwidth]{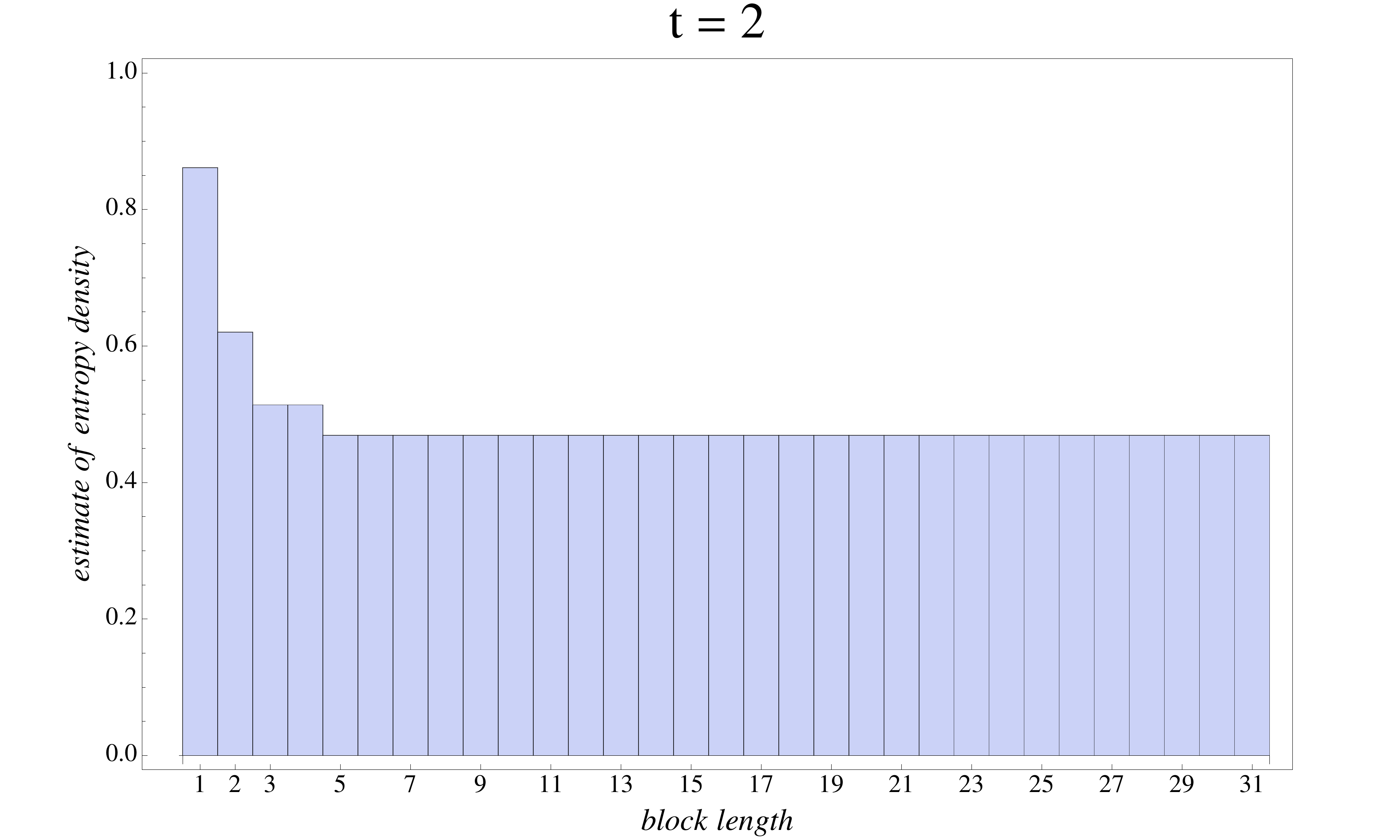}
                \caption{}
                \label{fig:st2}
        \end{subfigure}
        \begin{subfigure}[b]{0.48\textwidth}
                \includegraphics[width=\textwidth]{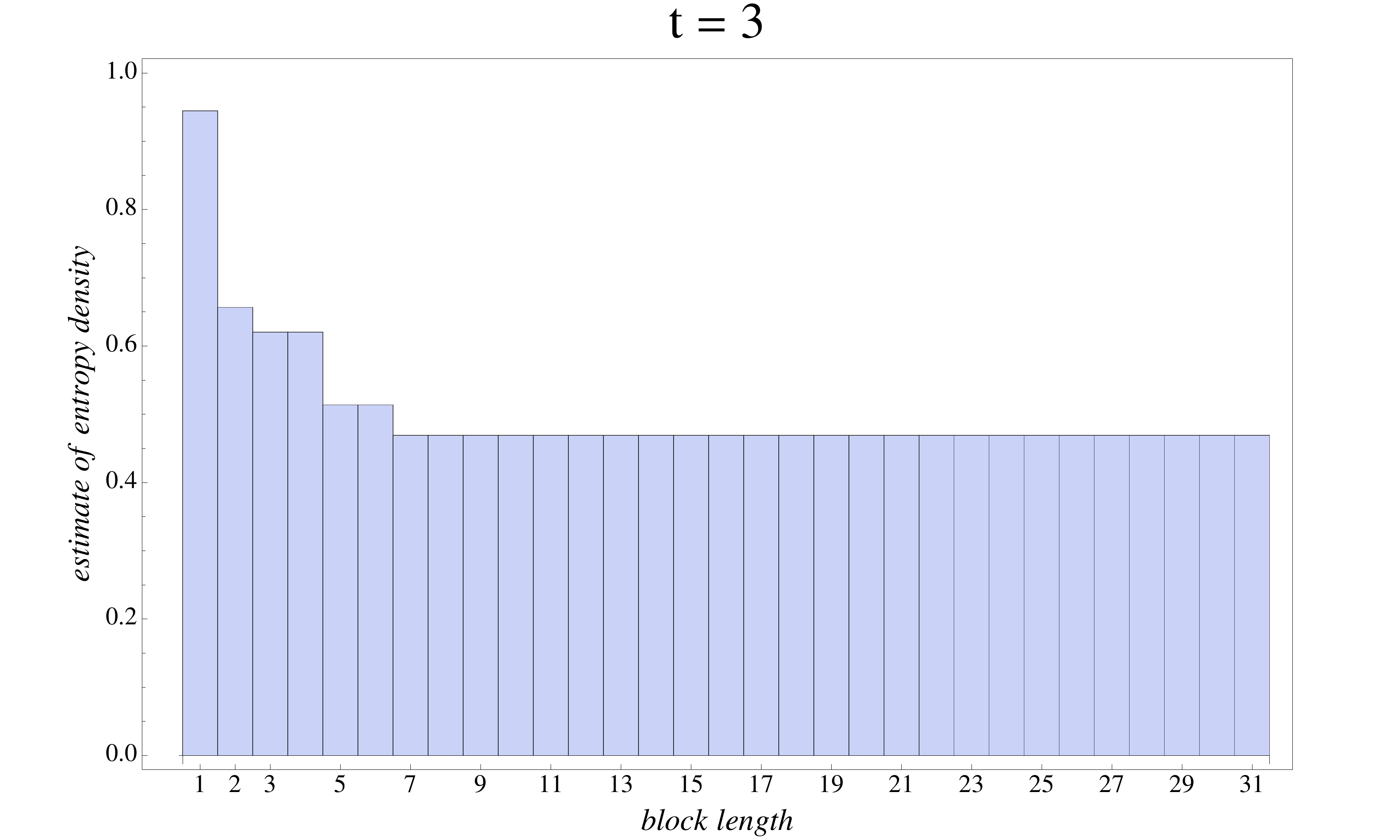}
                \caption{}
                \label{fig:st3}
        \end{subfigure}
        \\[12pt]
        \begin{subfigure}[b]{0.48\textwidth}
                \includegraphics[width=\textwidth]{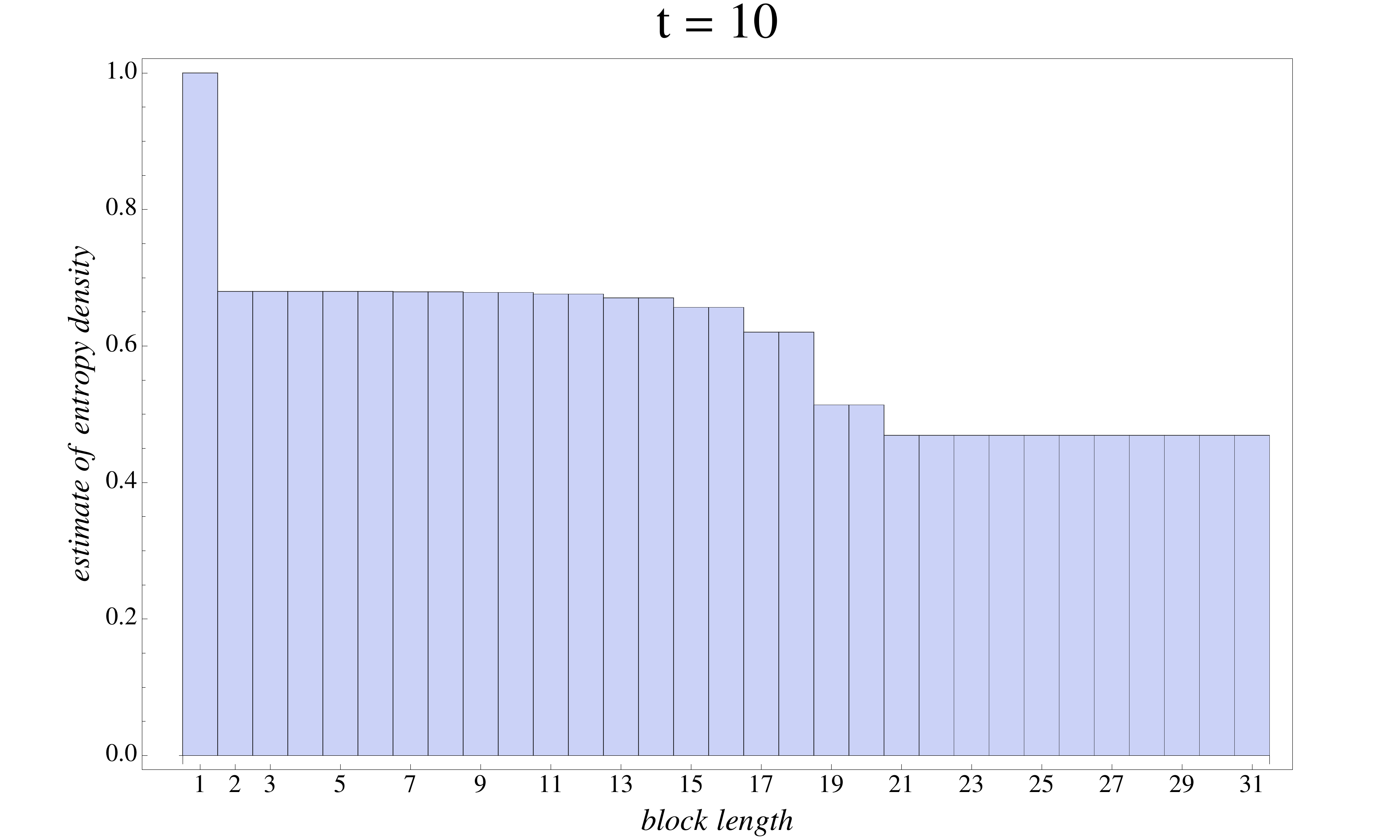}
                \caption{}
                \label{fig:st10}
        \end{subfigure}
        \begin{subfigure}[b]{0.48\textwidth}
                \includegraphics[width=\textwidth]{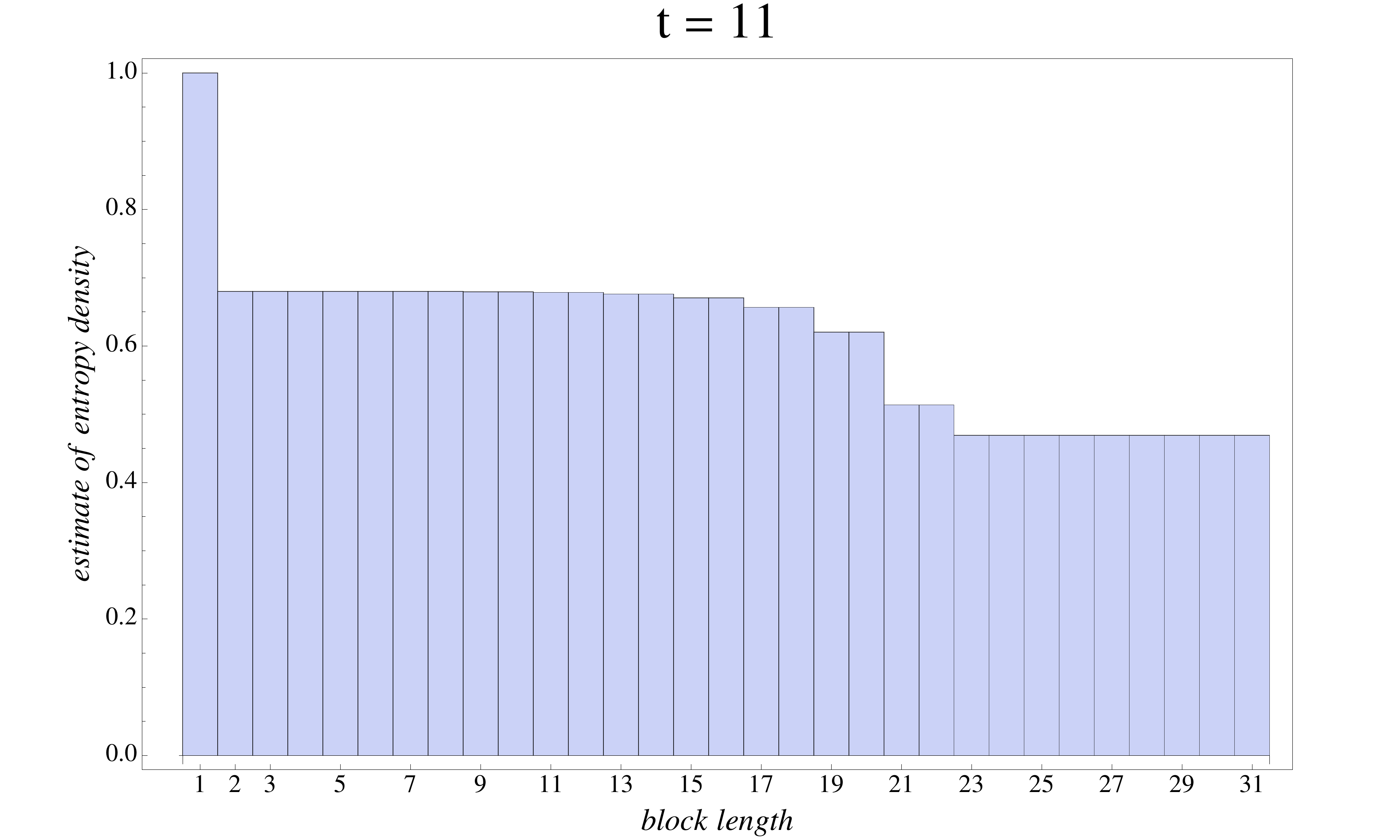}
                \caption{}
                \label{fig:st11}
        \end{subfigure}
        \caption{Temporal evolution of the of the entropy density estimate $h_m$ for block of length $m$ (with $1 \le m \le 31$) at different time steps $t$. For finite time $t$, if sufficiently long blocks $m$ are used for the entropy density estimate $h_m(t)$, then the correct entropy density $h$ is found. But, as time goes on, any finite block length estimate will converge to the equilibrium entropy density, $h_\text{eq}$. (In this calculation of $h_m(t)$ we have not included the constant contribution of $\log 2$ to the pair correlation information $k_2(t)$ that comes from the regular periodic pattern of updating and quiescent states.)}\label{fig:s-estimate}
\end{figure}

%This then means that the rate $2\zeta$ of linear increase in average correlation length, or excess entropy, as expressed by Proposition \ref{Prop:eta-new} is given by the difference in entropy density between the Q2R system and the corresponding equilibrium,
%
%\begin{align}
%\zeta = S[P^{(2)}] - S[P^{(1)}] = h_\text{eq} - h \;.
%\end{align}
%

\subsection{Approaching the equilibrium characteristics}
We have shown that the magnetisation converges to 0 exponentially, see Proposition \ref{magnetisation}, reflecting that $p(\uparrow \uparrow)$ and $p(\downarrow \downarrow)$ converge to the same value. Since $p(\uparrow \downarrow)$ is conserved, this implies that the estimate of the entropy density based on blocks of length 2, $h_2(t)$, approaches the equilibrium entropy density, $h_\text{eq}$,
\begin{align}
h_2(t) \rightarrow h_\text{eq} \; \text{  as  } t \rightarrow \infty \;.
\end{align}
But the convergence towards the equilibrium distribution characteristics goes beyond the pairs of symbols. Even if we would estimate the entropy density by using the $m$-length block statistics, also that would converge towards $h_\text{eq}$. This follows from the following observations.

We can calculate the entropy density estimate $h_m$ from $\log 4 - \sum_{j=1}^m k_j(t)$, where $k_j(t)$ are the correlation information contributions, see Eq.~(\ref{h_m}). This can be rewritten as
\begin{align}
h_m(t) = h_2(t) + \sum_{j=3}^m k_j(t) \;, 
\end{align}
where we have used that $h_2(t) = \log 4 - k_1(t) - k_2(t)$.

From Propositions \ref{Prop:k_2m}, \ref{Prop:k_2m+1}, and \ref{Prop:k_3}, we see that for finite $m>2$ all correlation information terms $k_m(t)$ will eventually decay towards $0$. This means that for finite $m$, the entropy density estimate, $h_m(t)$, converges to the equilibrium entropy density value, $h_\text{eq}$,
\begin{align}
h_m(t) \rightarrow h_\text{eq} \; \text{  as  } t \rightarrow \infty \;. \label{h_m-convergence}
\end{align}
% 
%This means that, if we choose a finite block length $M$, we have, after $\sim M/2$ time steps, an exponential convergence of all block probability distributions $P_m$ (with $m\le M$) towards the corresponding equilibrium ones. 

This process that brings us towards equilibrium characteristics for any finite block length is illustrated in Figs. \ref{fig:s-estimate} and \ref{fig:corr-info}. 

In Fig.~\ref{fig:s-estimate}, the entropy density estimate, $h_m(t)$, as a function of the block length $m$, is depicted for the first four time steps as well as for time steps $t=10$ and $t=11$. Here it is clear that, for the first time steps, we will be able to detect the correct entropy density of the system, but as time proceeds the finite length estimates $h_m(t)$ will converge towards a larger one, i.e., the equilibrium entropy density as we have shown above in Eq.~(\ref{h_m-convergence}).

The corresponding picture for the contributions to the correlation information, as a function of block length $m$, is shown for the same time steps $t$ as above in Fig.~\ref{fig:corr-info}. We note that, even though correlation information in total is conserved, a certain part of it is transferred to longer and longer blocks, as is stated by Proposition~\ref{Prop:k_2m+1}. Only one contribution remains at short length scales, $k_2(t)$, which is determined by the 0 magnetisation and the energy constraint. The figure clearly illustrates that all correlation information, except $k_2(t)$, will eventually become undetectable if finite block statistics is used which explains why finite length estimates of $h_m(t)$ converges to $h_\text{eq}$.

\begin{figure}[h]
        \centering
        \begin{subfigure}[b]{0.48\textwidth}
                \includegraphics[width=\textwidth]{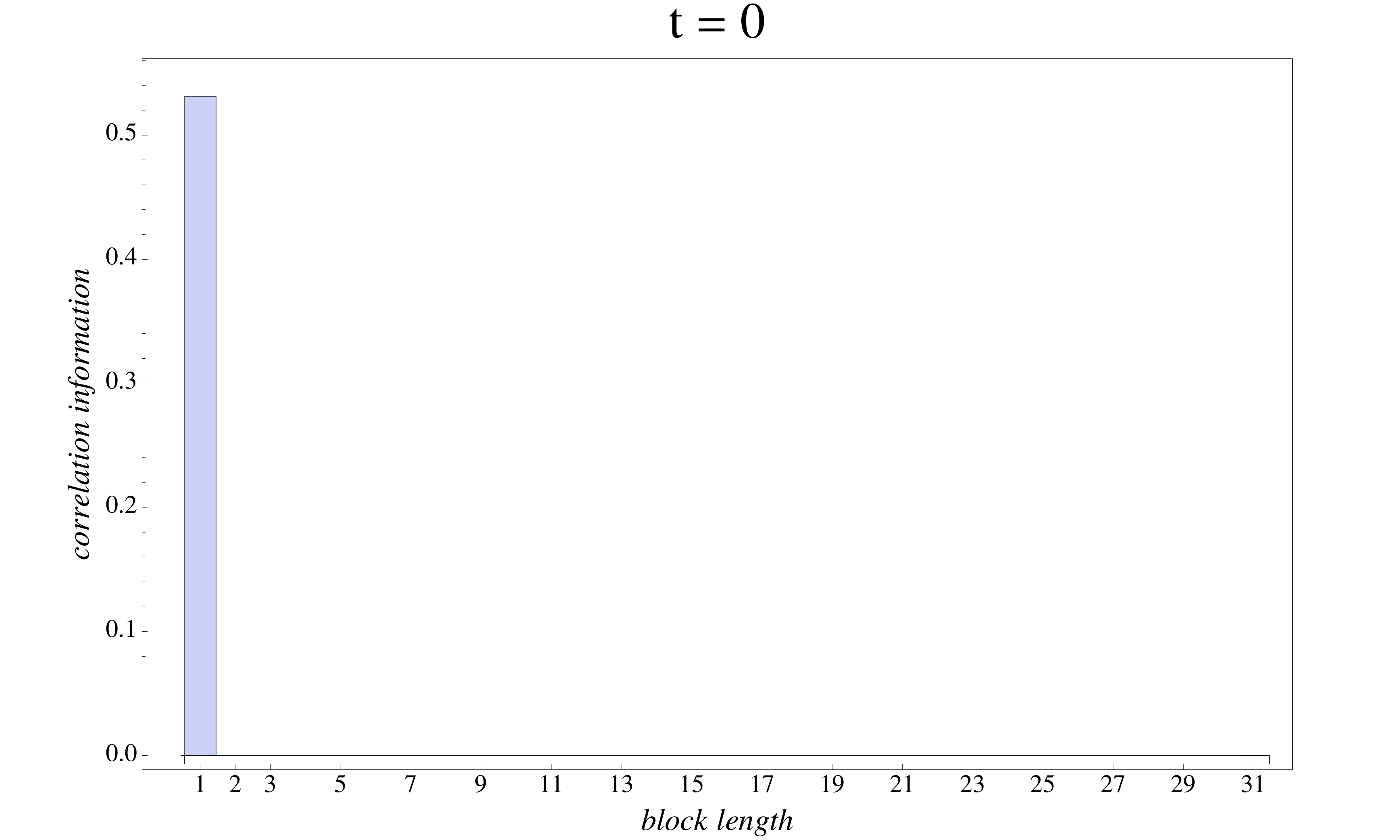}
                \caption{}
                \label{fig:t0}
        \end{subfigure}
        \begin{subfigure}[b]{0.48\textwidth}
                \includegraphics[width=\textwidth]{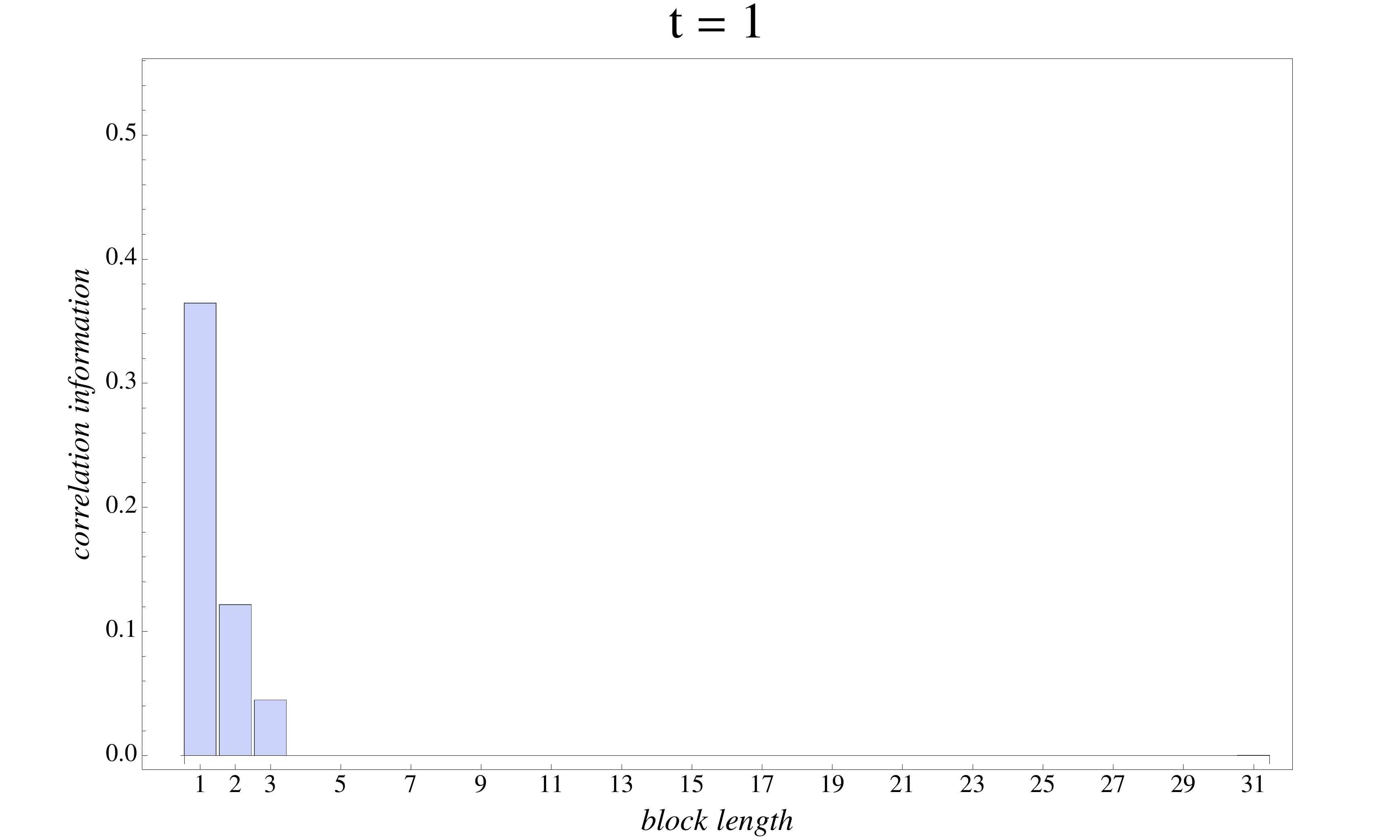}
                \caption{}
                \label{fig:t1}
        \end{subfigure}
        \\[12pt]
        \begin{subfigure}[b]{0.48\textwidth}
                \includegraphics[width=\textwidth]{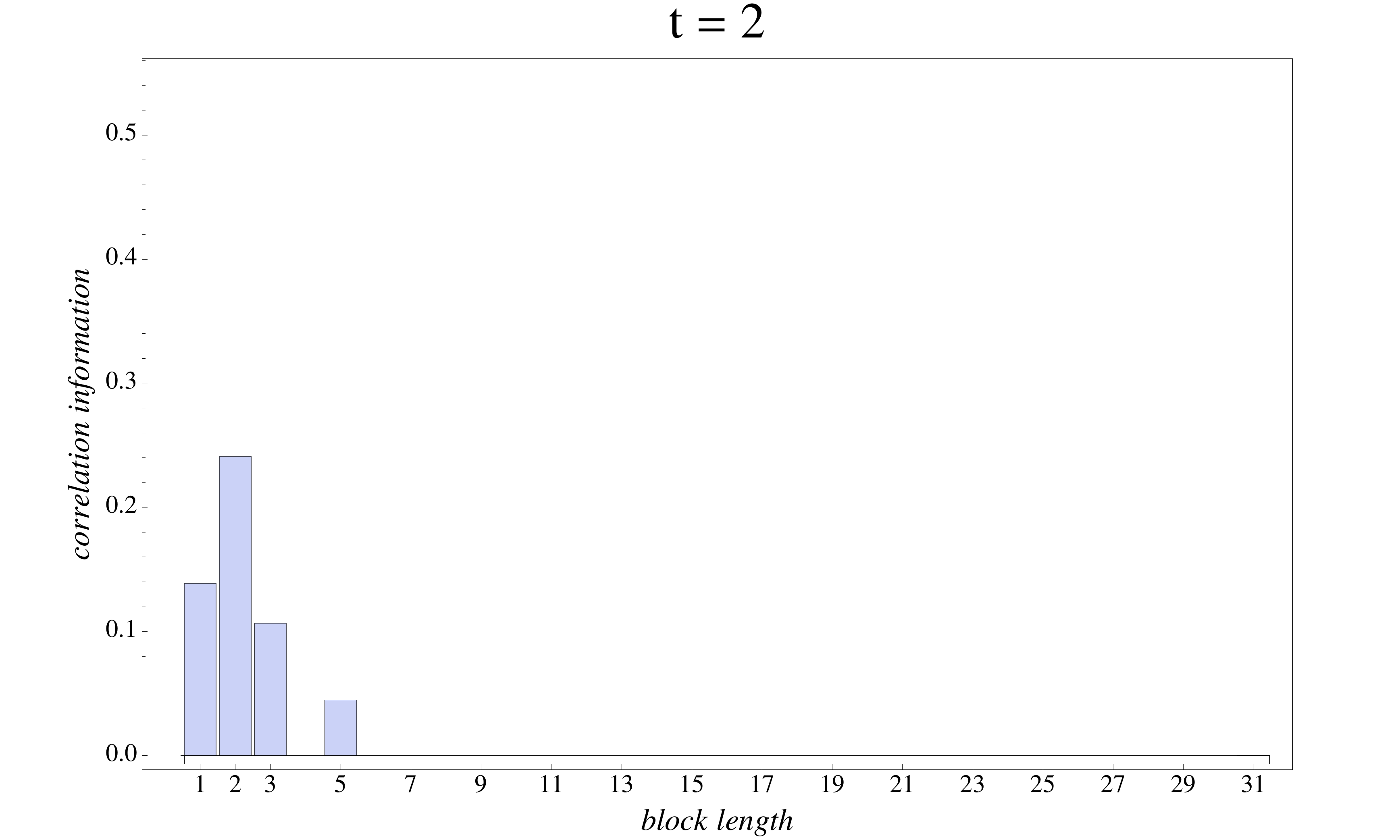}
                \caption{}
                \label{fig:t2}
        \end{subfigure}
        \begin{subfigure}[b]{0.48\textwidth}
                \includegraphics[width=\textwidth]{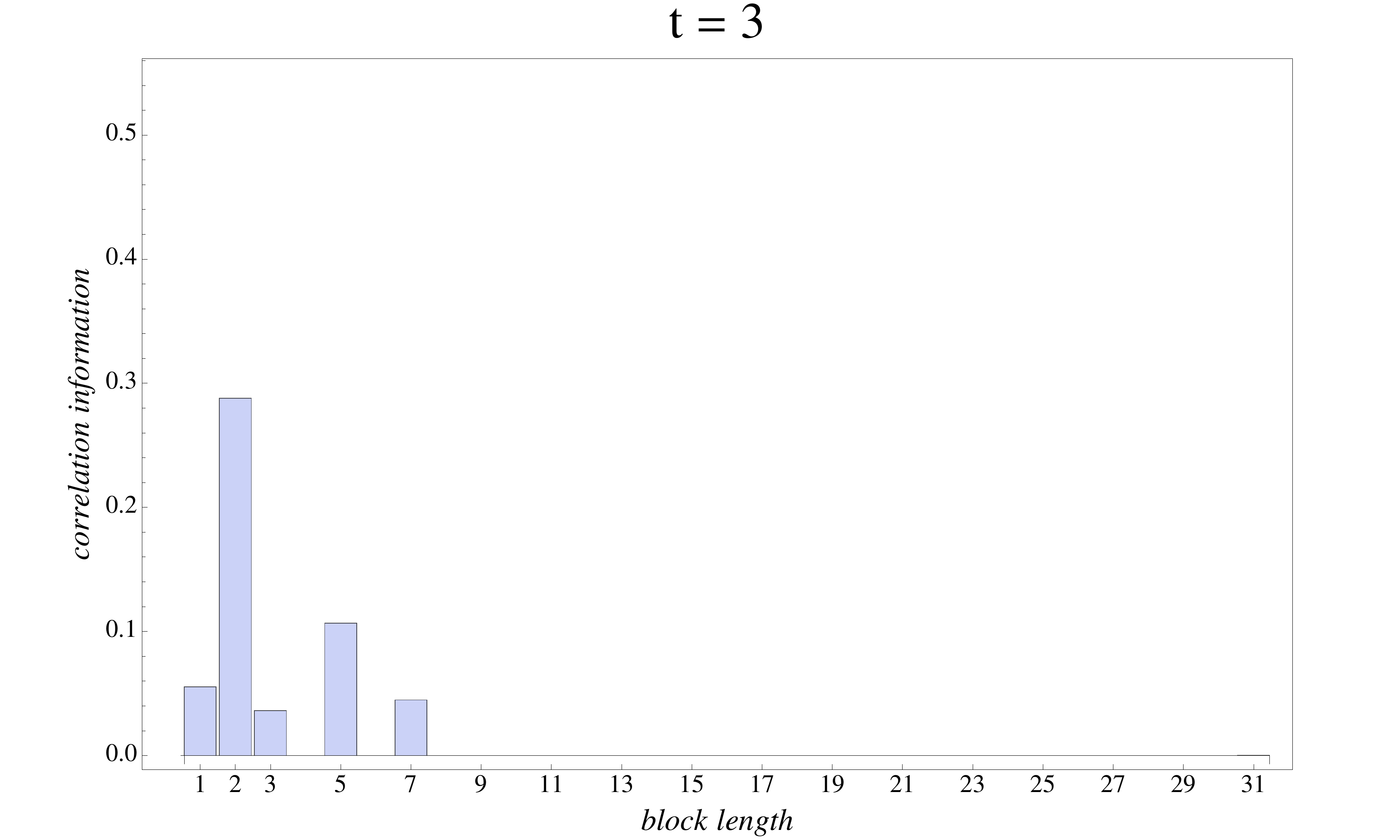}
                \caption{}
                \label{fig:t3}
        \end{subfigure}
        \\[12pt]
        \begin{subfigure}[b]{0.48\textwidth}
                \includegraphics[width=\textwidth]{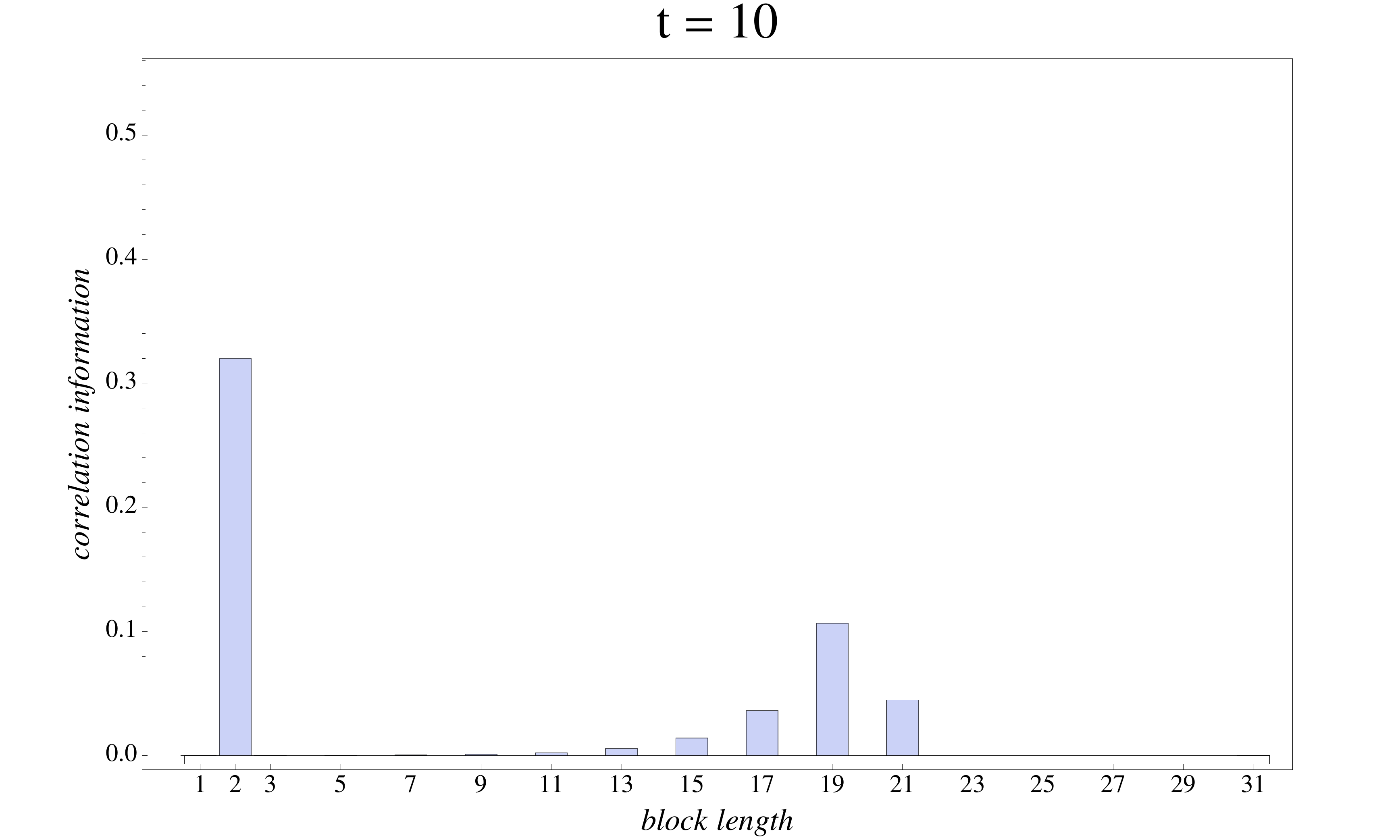}
                \caption{}
                \label{fig:t10}
        \end{subfigure}
        \begin{subfigure}[b]{0.48\textwidth}
                \includegraphics[width=\textwidth]{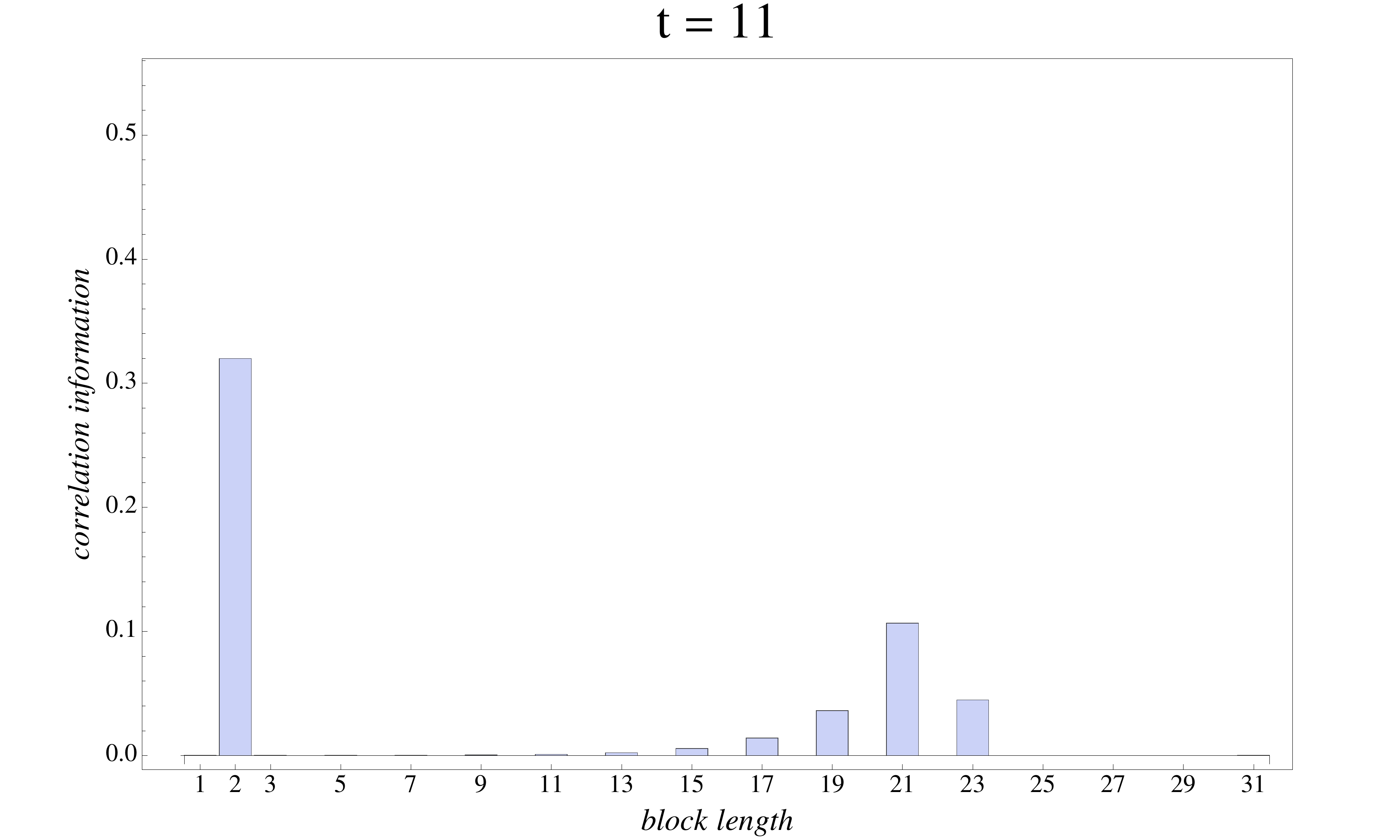}
                \caption{}
                \label{fig:t11}
        \end{subfigure}
        \caption{Temporal evolution of the correlation information $k_m(t)$, for $1 \le m \le 31$, at different time steps $t$. The total correlation information is conserved, but a certain part is transferred to longer and longer blocks. (Here we have subtracted the constant contribution of $\log 2$ to the pair correlation information $k_2(t)$ that comes from the regular periodic pattern of updating and quiescent states.)}\label{fig:corr-info}
\end{figure}

\section{Discussion}
\label{sec:Discussion}
In this paper we have investigated how, and in what sense, a microscopically reversible process can bring a system towards equilibrium. We have considered spatial configurations as generated by stationary ergodic stochastic processes, which has allowed us to study the characteristics of infinite systems directly from the beginning. Moreover, due to the ergodicity, a single microstate can be considered as a typical representative of the whole ensemble. We note that the reversible dynamics, given by the Q2R rule, implies that the (spatial) entropy density is conserved. 
We assume an initial state of independent spins generated by a Bernoulli process and non-zero magnetisation, i.e., $p(\uparrow,t=0) \neq \tfrac{1}{2}$. We have shown that under the reversible Q2R dynamics the system converges exponentially towards zero magnetisation --- the equilibrium value.

The analysis of the conditional entropies $h_m(t)$, the correlation information $k_m(t)$, and the excess entropy $\eta(t)$ as it is depicted in Figs.~\ref{fig:s-estimate} and \ref{fig:corr-info} provides a clear picture of how the approach towards equilibrium characteristics can be understood: The loss of local information, i.e., short-length correlation information, and the corresponding increase of local entropy, i.e., short-length estimates of entropy density $h_m(t)$, is compensated for by building up long-range correlations. The dynamics leads to two different kinds of correlation information. First, there is the ``thermodynamic'' pair correlation information, $k_2(t) \rightarrow k_{2,\text{eq}}$, i.e., the mutual information between neighbouring spins, which characterizes the thermodynamic equilibrium. This term would also arise in a stochastic dynamics of thermalization such as the Glauber dynamics. This term is directly related to the equilibrium value of the thermodynamic entropy density which is given by the plateau of the conditional entropy $h_m(t)$, for large $t$ and $m \ll 2t+1$ and, from Eq.~(\ref{h_m-convergence}), this can be expressed formally as 
\begin{equation}
h_\text{eq} = \lim_{m \to \infty} \lim_{t \to \infty} h_m(t) \;. \label{mt-limit}
\end{equation}
Second, there is a correlation information quantity that directly reflects the reversible nature of the microscopic dynamics: the non-zero terms for $k_m(t)$, with dominating contributions around $m=2t-1$ and $t \gg 1$. The spatial distance on which these dependencies occur increases linearly with time which leads to a linearly increasing excess entropy. These terms are directly related to the difference between the thermodynamic or equilibrium entropy density and the entropy density of the studied system. And this implies that taking the limits of Eq.~(\ref{mt-limit}) in the different order we get the lower entropy density as determined by the initial state,
\begin{equation}
h= \lim_{t \to \infty} \lim_{m \to \infty} h_m(t)  < h_\text{eq} \;.
\end{equation}
The difference between these two expression, i.e., $h_\text{eq} - h$, is the entropy increase when an initial non-equilibrium state is brought to equilibrium. As we have formally shown in this paper, this entropy increase is the amount of correlation information that is being spread out over increasing distances in the time evolution, leading to a linearly increasing information-theoretic correlation length.
Note also that this difference, $h_\text{eq} - h = S[P^{(2)}]-S[P^{(1)}]$ is the linear rate by which the information-theoretic correlation length (or the excess entropy) increases over time as stated in Proposition~\ref{Prop:eta-new}.

\bibliographystyle{spmpsci}
\bibliography{Q2R}

\end{document}